\definecolor{backgrey}{rgb}{0.86,0.86,0.86}
\definecolor{dblue}{rgb}{0,0.0,0.5}
\definecolor{dred}{rgb}{0.4,0.2,0}
\definecolor{dgreen}{rgb}{0.0,0.5,0}
\newcommand{\captionfonts}{\small}
\long\def\@makecaption#1#2{%
  \vskip\abovecaptionskip
  \sbox\@tempboxa{{\captionfonts #1: #2}}%
  \ifdim \wd\@tempboxa >\hsize
    {\captionfonts #1: #2\par}
  \else
    \hbox to\hsize{\hfil\box\@tempboxa\hfil}%
  \fi
  \vskip\belowcaptionskip}
\newtheorem{theorem}{Theorem}
\newtheorem{assumption}{Assumption}
\newtheorem{remark}{Remark}
\newtheorem{proof}{Proof}
\newtheorem{example}{Example}
\newtheorem{corollary}{Corollary}
\newtheorem{definition}{Definition}
\newtheorem{lemma}{Lemma}
\title{\LARGE \bf The Information Flow and Capacity of Channels with Noisy Feedback}
\author{\quad Chong Li and Nicola Elia 
\thanks{Chong Li is a Ph.D. student with the Department of Electrical and
Computer Engineering, Iowa State University, Ames, IA, 50011
chongli@iastate.edu.}
\thanks{Dr.~Nicola Elia is with the Department of Electrical and
Computer Engineering,
Iowa State University,
Ames, IA, 50011 nelia@iastate.edu.}
\thanks{Part of the material in this paper has appeared in \cite{chong_isit11}.}%
}
\begin{document}
\maketitle \thispagestyle{empty} \pagestyle{plain}
\begin{abstract}
In this paper, we consider some long-standing problems in communication systems with access to noisy feedback. We introduce a new notion, the \textit{residual directed information}, to capture the effective information flow (i.e. mutual information between the message and the channel outputs) in the forward channel. In light of this new concept, we investigate discrete memoryless channels (DMC) with noisy feedback and prove that the noisy feedback capacity is not achievable by using any typical closed-loop encoder (non-trivially taking feedback information to produce channel inputs). We then show that the residual directed information can be used to characterize the capacity of channels with noisy feedback. Finally, we provide computable bounds on the noisy feedback capacity, which are characterized by the \textit{causal conditional directed information}.
\end{abstract}
\begin{keywords}
\normalfont \normalsize
Noisy feedback, information flow, directed information, discrete memoryless channel, coding, capacity.
\end{keywords}

\section{Introduction}
\indent The theory of feedback has been well studied \cite{Doyle92} for control systems but only partially investigated for communication systems. So far, a large body of work has looked at communication channels with perfect feedback and obtained many notable results. See \cite{Schalkwijk66,Schalkwijk66_2,cover89,Massey1990,Liu05,Ofer07,Tati09,Permuter09} and reference therein. As an illustration, it is known that perfect feedback improves the error exponent and reduces the coding complexity \cite{Cover88}. For channels with memory, using perfect feedback can increase the capacity compared with the non-feedback case \cite{cover89}. However, only few papers have studied channels with noisy feedback and many challenging problems are still open. Namely, how does noisy feedback affect the transmission rate in forward communication channels? Is noisy feedback helpful in improving decoding error exponent or reducing encoding complexity? More generally, is feedback beneficial to
communicate even though it is noisy? These questions are difficult because the noisy feedback induces a loss of coordination between the transmitter and the receiver. We can classify the results in the literature into two main categories. The first category studies the usefulness of noisy feedback by investigating reliability functions and error exponents. \cite{Draper06} shows that the noisy feedback can improve the communication reliability by specifying a variable-length coding strategy. \cite{Kim07} derives the upper and lower bounds on the reliability function of the additive white Gaussian noise channel with additive white Gaussian noise feedback. \cite{Burnashev08} considers a binary symmetric channel with a binary symmetric feedback link and shows that the achievable error exponent is improved under certain conditions. The second category focuses on the derivation of coding schemes mostly for additive Gaussian channels with noisy feedback based on the well-known Schalkwijk-Kailath scheme \cite{Schalkwijk66}. We refer interested readers to \cite{Omura68,Lavenberg71,Martins08,Chance10,Kumar} for details.\\
\indent Instead of concentrating on specific aspects or channels, in this paper, we study the noisy feedback problem in generality. We first focus on the effective information flow through channels with noisy feedback. We introduce a new concept, the residual directed information, which exactly quantifies the effective information flow through the channel and provides us with a clear view of the information flow in the noisy feedback channel. In light of this new concept, we show the failure of using the \textit{directed information} defined by Massey \cite{Massey1990} in noisy feedback channels, which is otherwise useful in the perfect feedback case. Furthermore, we investigate the DMC with \textit{typical noisy feedback} (definition \ref{def_typcialnoise}) and prove that the capacity is not achievable by using any typical closed-loop encoder (definition \ref{def_typicalencoder}). In other words, no encoder that typically (to be made more precise in the paper) uses the feedback information can achieve the capacity. This negative result is due to the fact that, by typically using noisy feedback, we need sacrifice certain rate for signaling in order to rebuild the cooperation of the transmitter and receiver such that the message can be recovered with arbitrarily small probability of error. Next, we give a general channel coding theorem in terms of the residual directed information for channels with noisy feedback, which is an extension of \cite{bookhan03}. The main idea is to convert the channel coding problem with noisy feedback into an equivalent channel coding problem without feedback by considering code-functions instead of code-words \cite{Shannon58}, \cite{Tati09}. In fact, code-functions can be treated as a generalization of code-words. By explicitly relating code-function distributions and channel input distributions, we convert a mutual information optimization problem over code-function distributions into a residual directed information optimization problem over channel input distributions. Although the theoretical result is in the form of an optimization problem, computing the optimal solution is not feasible. We then turn to investigate computable bounds which are characterized by the causal conditional directed information. Since this new form is a natural generalization of the directed information, the computation is amenable to the dynamic programming approach proposed by Tatikonda and Mitter \cite{Tati09} for the perfect feedback capacity problem.\\
\indent The main contributions of this paper can be summarized as follows: $1)$. We propose a new information theoretic concept, the residual directed information, to identify and capture the effective information flow in communication channels with noisy feedback and then analyze the information flow in the forward channel. $2)$. We prove that, for DMC with typical noisy feedback, no capacity-achieving closed-loop encoding strategy exists under certain reasonable conditions. $3)$. We show a general noisy feedback channel coding theorem in terms of the residual directed information. $4)$. We propose computable bounds on the noisy feedback capacity, which are characterized by the causal conditional directed information.\\
\indent Throughout the paper, capital letters $X,Y,Z,\cdots$ will represent random variables and lower case letters $x,y,z,\cdots$ will represent particular realizations. We use $x^n$ to represent the sequence $(x_1,x_2,\cdots,x_n)$ and $x^0=\emptyset$. We use $\log$ to represent logarithm base $2$.
\section{Technical Preliminaries}
\indent In this section, we review and give some important definitions of probability theory and information theory, which are used throughout the paper. We begin with the following assumption.
\begin{assumption}
Every random variable considered throughout the paper is in a finite set (i.e. $\mathcal{X},\mathcal{Y},\mathcal{Z},\cdots$) with the power set $\sigma$-algebra.
\end{assumption}

\indent Although we restrict our exposition to finite alphabets, most of the results in this paper can be extended to the case of any abstract set (i.e. countably infinite or continuous alphabets ).

\begin{definition}\cite{cover06} (\textit{Entropy})
The entropy $H(X)$ of a discrete random variable $X$ is defined by
\begin{equation*}
H(X)=-\sum_{x\in\mathcal{X}} p(x)\log p(x)
\end{equation*}
\end{definition}

\indent We have the following properties of entropy.\\
\indent (P1) $H(X)\geq 0$.\\
\indent (P2) $H(X,Y)=H(X)+H(Y|X)$.\\
\indent (P3) $H(X)\leq \log\lvert\mathcal{X}\rvert$, where $\lvert\mathcal{X}\rvert$ denotes the cardinality of the finite set $\mathcal{X}$, with equality if and only if $X$ has a uniform distribution.\\

\begin{definition}(\textit{Mutual Information and Its Density})
Consider two random variables $X$ and $Y$ with a joint probability mass function $p(X,Y)$ and marginal probability mass functions $p(X)$
and $p(Y)$. The mutual information $I(X;Y)$ is defined by
\begin{equation*}
I(X;Y)=\mathbb{E}_{p(X,Y)}\log\frac{p(X,Y)}{p(X)p(Y)}
\end{equation*}
and the mutual information density is defined by
\begin{equation*}
i(X;Y)=\log\frac{p(X,Y)}{p(X)p(Y)}
\end{equation*}
\end{definition}

\indent We present three properties of mutual information which will be used later.\\
\indent (P4) $I(X;Y)=H(X)-H(X|Y)$.\\
\indent (P5) $I((X,Y);Z|U)=I(Y;Z|U)+I(X;Z|(Y,U))$.\\
\indent (P6) $I(X;Y|Z)=H(Y|Z)-H(Y|X,Z)$\\
\indent (P$4$) shows the relationship between mutual information and entropy. (P$5$) is Kolmogorov's formula\cite{pinsker64}.\\
\indent Now, we introduce a notion of causal conditional probability with respect to a time ordering of random variables.
\begin{definition}(\textit{Causal Conditional Probability})
Given a time ordering of random variables $(X^n,Y^n)$
\begin{equation}
X_1, Y_1, X_2, Y_2,\cdots, X_n, Y_n
\label{equ2_1}
\end{equation}
where $X^n\in\mathcal{X}^n$ and $Y^n\in\mathcal{Y}^n$, the causal conditional probability is defined by the following expression
\begin{equation*}
\overrightarrow{p}(x^n|y^n)=\prod_{i=1}^n p(x_i|x^{i-1},y^{i-1}) \quad \text{and} \quad \overrightarrow{p}(y^n|x^n)=\prod_{i=1}^n p(y_i|x^{i},y^{i-1}).
\end{equation*}

\indent Next, we present the definition of \textit{directed information} with respect to the time ordering sequence (\ref{equ2_1}).
\begin{definition}(\textit{Directed Information and Its Density})
Given a time ordering of random variables $(X^n,Y^n)$ as (\ref{equ2_1}) where $X^n\in\mathcal{X}^n$ and $Y^n\in\mathcal{Y}^n$, the directed information from a sequence $X^n$ to a sequence $Y^n$ is defined by
\begin{equation*}
I(X^n\rightarrow Y^n)=\mathbb{E}_{p(X^n,Y^n)}\log\frac{\overrightarrow{p}(Y^n|X^n)}{p(Y^n)}
\end{equation*}
and the directed information density is defined by
\begin{equation*}
i(X^n\rightarrow Y^n)=\log\frac{\overrightarrow{p}(Y^n|X^n)}{p(Y^n)}
\end{equation*}
\end{definition}

\indent Note that Massey's definition of directed information \cite{Massey1990} can be easily recovered by the above definition.
\begin{equation*}
\begin{split}
I(X^n\rightarrow Y^n)=&\mathbb{E}_{p(X^n,Y^n)}\log\frac{\overrightarrow{p}(Y^n|X^n)}{p(Y^n)}\\
=&\sum_{x^n\in\mathcal{X}^n,y^n\in\mathcal{Y}^n}p(x^n,y^n)\log\frac{\overrightarrow{p}(y^n|x^n)}{p(y^n)}\\
=&\sum_{x^n\in\mathcal{X}^n,y^n\in\mathcal{Y}^n}p(x^n,y^n)\sum_{i=1}^n\log\frac{p(y_i|x^i,y^{i-1})}{p(y_i|y^{i-1})}\\
=&\sum_{i=1}^n\sum_{x^n\in\mathcal{X}^n,y^n\in\mathcal{Y}^n}p(x^n,y^n)\log\frac{p(y_i|x^i,y^{i-1})}{p(y_i|y^{i-1})}\\
=&\sum_{i=1}^n\sum_{x^i\in\mathcal{X}^i,y^i\in\mathcal{Y}^i}p(x^i,y^i)\log\frac{p(y_i|x^i,y^{i-1})}{p(y_i|y^{i-1})}\\
=&\sum_{i=1}^n I(X^i;Y_i|Y^{i-1})\\
\end{split}
\end{equation*}

\indent We refer the interested readers to \cite{Tati09} for the definition of directed information for an arbitrary time ordering of random variables. Next, we extend the definition of directed information to the causal conditional directed information as follows.
\begin{definition}(\textit{Causal Conditional Directed Information and Its Density})
Given a time ordering of random variables $(X^n,Y^n,Z^n)$
\begin{equation}
X_1, Y_1, Z_1, X_2, Y_2, Z_2, \cdots, X_n, Y_n, Z_n
\end{equation}
where $X^n\in\mathcal{X}^n$, $Y^n\in\mathcal{Y}^n$ and $Z^n\in\mathcal{Z}^n$, the directed information from a sequence $X^n$ to a sequence $Y^n$ causally conditioning on $Z^n$ is defined by
\begin{equation*}
I(X^n\rightarrow Y^n||Z^n)=\mathbb{E}_{p(X^n,Y^n,Z^n)}\log\frac{\overrightarrow{p}(Y^n|X^n,Z^n)}{\overrightarrow{p}(Y^n|Z^n)}
\end{equation*}
and the causal conditional directed information density is defined by
\begin{equation*}
i(X^n\rightarrow Y^n||Z^n)=\log\frac{\overrightarrow{p}(Y^n|X^n,Z^n)}{\overrightarrow{p}(Y^n|Z^n)}
\end{equation*}
\label{def6}
\end{definition}
where
\begin{equation*}
\overrightarrow{p}(y^n|x^n,z^n)=\prod_{i=1}^{n}p(y_i|x^{i},y^{i-1},z^{i-1})
\end{equation*}
\end{definition}

\indent It is easy to verify that
\begin{equation*}
I(X^n\rightarrow Y^n||Z^n)=\sum_{i=1}^{n}I(X^i,Y_i|Y^{i-1},Z^{i-1})
\end{equation*}

\begin{remark}
If Markov chains $Z_{i}^n-Y^{i-1}-Y^i$, $Z_{i}^n-(X^i,Y^{i-1})-Y^i$ hold, we may obtain
\begin{equation*}
\begin{split}
&I(X^n\rightarrow Y^n|Z^n)\\
=&\sum_{i=1}^n I(X^i,Y_i|Y^{i-1},Z^n)\\
=&\sum_{i=1}^n H(Y_i|Y^{i-1},Z^n)-H(Y_i|X^i,Y^{i-1},Z^n)\\
=&\sum_{i=1}^n H(Y_i|Y^{i-1},Z^{i-1})-H(Y_i|X^i,Y^{i-1},Z^{i-1})\\
=&\sum_{i=1}^n I(X^i,Y_i|Y^{i-1},Z^{i-1})\\
=&I(X^n\rightarrow Y^n||Z^n)\\
\end{split}
\end{equation*}
That is, the ``causal conditioning'' and the ``normal conditioning'' coincide.
\label{remark_pre_01}
\end{remark}

\section{Residual Directed Information and Information Flow}
\indent In this section, we first introduce the setup considered in the paper and give a high-level discussion on the failure of using either mutual information or directed information as a measure of the effective information flow through the channel. Then we define a new measure, named \textit{residual directed information}, and derive its properties. Finally, we analyze the information flow in the noisy feedback channel.

\begin{figure}
\begin{center}
\includegraphics[scale=0.70]{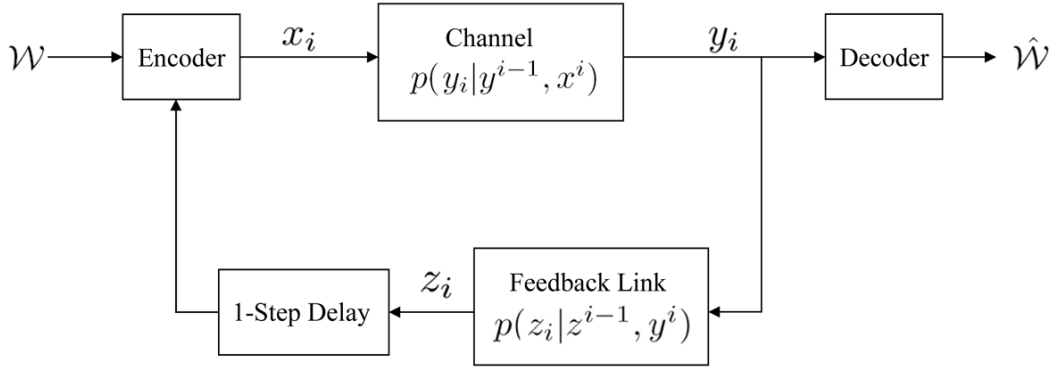}
\caption{Channels with noisy feedback}
\label{figure1}
\end{center}
\end{figure}

\subsection{Noisy Feedback and Causality}
\indent According to Fig.\ref{figure1}, we model the channel at time $i$ as $p(y_i|x^i,y^{i-1})$. The channel output (without any encoding) is fed back to the encoder through a noisy link, which is modeled as $p(z_i|y^i,z^{i-1})$. At time $i$, the deterministic encoder takes the message $\mathcal{W}$ and the past outputs $Z_1,Z_2,\cdots,Z_{i-1}$ of the feedback link, and then produces a channel input $X_i$. Note that the encoder has access to the output of the feedback link with one time-step delay. At time $n$, the decoder takes all the channel outputs $Y_1,Y_2,\cdots,Y_{n}$ and then produces the decoded message $\hat{\mathcal{W}}$. We present the time ordering of these random variables below.
\begin{equation*}
W,X_1,Y_1,Z_1,X_2,Y_2,Z_2,\cdots,X_{n-1},Y_{n-1},Z_{n-1},X_n,Y_n,\hat{W}
\end{equation*}
\indent Note that all initial conditions (e.g. channel, feedback link, channel input, etc.) are automatically assumed to be known in prior by both the encoder and the decoder. Before entering the more technical part of this paper, it is necessary to give a specific definition of ``noisy feedback''.
\begin{definition}(\textit{Noisy Feedback Link})
The feedback link is noisy if for some time instant $i$ there exists no function $g_i$ such that
\begin{equation}
g_i(X^{i},Z^{i},W)= Y^{i}.
\label{eqI_01}
\end{equation}
The feedback link is noiseless if it is not noisy.
\label{def_noisyfb}
\end{definition}
\begin{remark}
This definition states that, for noisy feedback links, not all the channel outputs can be exactly recovered at the encoder side and, therefore, the encoder and decoder lose mutual understanding. In other words, at time instant $i+1$, the encoder cannot access to the past channel outputs $Y^i$ through information $(X^{i},Z^{i},W)$ to produce channel input $X_{i+1}$. We refer ``perfect (ideal) feedback'' to be the case of $Z^i=Y^i$ for all time instant $i$.  Essentially, noiseless feedback is equivalent to perfect feedback since, in both cases, the encoder can access to the channel outputs without any error.
\end{remark}
\begin{example}
Consider the feedback link as $Z_i=Y_i+V_i$ where $V_i$ denotes additive noise at time instant $i$. If channel outputs $Y_i$ only takes value in a set of integers (i.e. $\pm 1, \pm 2, \cdots$) and $V_i$ only takes value in $\lbrace \pm 0.2, \pm 0.4 \rbrace$, then obviously the channel outputs can be exactly recovered at the encoder side. Thus, this feedback link is noiseless even though it is imperfect.
\label{exp01}
\end{example}

\indent Next, we give a definition of \textit{typical noisy feedback link} which will be studied in the next section.
\begin{definition}(\textit{Typical Noisy Feedback Link})
Given channel $\lbrace p(y_i|x^i,y^{i-1})\rbrace_{i=1}^\infty$, the noisy feedback link $\lbrace p(z_i|y^i,z^{i-1})\rbrace_{i=1}^\infty$ is typical if it satisfies
\begin{equation}
\liminf_{n\rightarrow \infty}\frac{1}{n}\sum_{i=1}^{n}H(Z^{i-1}|Y^{i-1})>0
\end{equation}
for any channel input distribution $\lbrace p(x_i|x^{i-1},z^{i-1})\rbrace_{i=1}^\infty$. The noisy feedback link is non-typical if it is not typical.
\label{def_typcialnoise}
\end{definition}
\begin{remark}
This definition implies that the noise in the feedback link must be active consistently over time (e.g. not physically vanishing). In practice, the typical noisy feedback link is the most interesting case for study.
\end{remark}

\begin{example}
Consider a binary symmetric feedback link modeled as $Z_i=Y_i\oplus V_i$ where noise $V_i$ is i.i.d and takes value from $\lbrace 0,1\rbrace$ with equal probability. Then we have
\begin{equation*}
\begin{split}
\liminf_{n\rightarrow \infty}\frac{1}{n}\sum_{i=1}^{n}H(Z^{i-1}|Y^{i-1})=&\liminf_{n\rightarrow \infty}\frac{1}{n}\sum_{i=1}^{n}H(V^{i-1}|Y^{i-1})\\
\geq &\liminf_{n\rightarrow \infty}\frac{1}{n}\sum_{i=1}^{n}H(V_{i-1}|Y^{i-1})\\
\stackrel{(a)}{=}&\liminf_{n\rightarrow \infty}\frac{1}{n}\sum_{i=1}^{n}H(V_{i-1})\\
=&1\\
\end{split}
\end{equation*}
where (a) follows the fact that $Y^{i-1}$ is independent from $V_{i-1}$ due to one step delay. Therefore, this noisy feedback link is typical.
\end{example}

\indent We summarize the family of the feedback link in Fig.\ref{111}.\footnote{In the sequel, the term ``noisy feedback'' refers to ``typical noisy feedback'' unless specified.} We next define the achievable rate and capacity for channels with noisy feedback.
\begin{definition}(\textit{Channel Code})
Consider a message $\mathcal{W}$ which is drawn from an index set $\lbrace 1,2,\cdots,M\rbrace$ and a noisy feedback communication channel $(\mathcal{X}^n, \lbrace p(y_i|x^i,y^{i-1})\rbrace_{i=1}^{n}, \mathcal{Y}^n,\lbrace p(z_i|y^i,z^{i-1})\rbrace_{i=1}^{n},\mathcal{Z}^n)$ with the interpretation that $X_i$ is the input and $Y_i$ is the output/input of the channel/feedback and $Z_i$ is the output of the noisy feedback link at time instant $i$ ($1\leq i\leq n$). Then a $(M,n)$ channel code consists of an index set $\lbrace 1,2,\cdots,M\rbrace$, an encoding function: $\lbrace 1,2,\cdots,M\rbrace\times \mathcal{Z}^{n-1}\rightarrow\mathcal{X}^n$, and a decoding function:$\mathcal{Y}^n\rightarrow \lbrace 1,2,\cdots,M\rbrace$ where the decoding function is a deterministic rule that assigns a guess to each possible received vector.
\end{definition}

\begin{definition}(\textit{Achievable Rate})
The rate $R$ of a $(M,n)$ code is
\begin{equation*}
R=\frac{\log M}{n} \qquad \text{bits per channel use}
\end{equation*}
The rate is said to be achievable if there exists a sequence of $(2^{nR},n)$ codes\footnote{With a slight abuse of notation, we write $nR$ instead of $\lfloor nR\rfloor$ for convenience.} such that the maximal probability of error tends to zero as $n\rightarrow \infty$.
\end{definition}

\begin{definition}(\textit{Channel Capacity})
The capacity of a channel with noisy feedback is the supremum of all achievable rates.
\end{definition}

\indent When there is no feedback from the channel output to the encoder, the maximum of mutual information (i.e. $\max_{p(x^n)}{I(X^n;Y^n)}$) characterizes the maximum information flow through the channel with arbitrarily small probability of decoding error. This quantity is defined as the capacity of the channel. When there is a noiseless feedback, supremizing directed information $I(X^n\rightarrow Y^n)$ over $\overrightarrow{p}(x^n|y^n)$ gives us the feedback capacity \cite{Tati09}, \cite{Kim08_capacity_fb}, \cite{Permuter09}. When there is a noisy feedback, the appropriate measure/characterization of the effective information flow through the channel has been unknown until now. In the next section, we provide the missing measure.
\begin{figure}
\begin{center}
\includegraphics[scale=0.50]{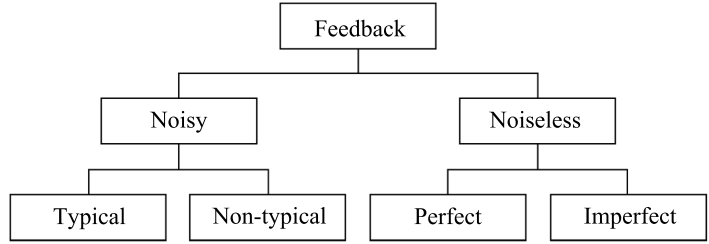}
\caption{Family of Feedback links in Communication systems. The ``typical noisy feedback'' is the case which we are interested in.}
\label{111}
\end{center}
\end{figure}
\subsection{Residual Directed Information}
\indent Based on the ``(causal conditional) directed information'', the \textit{residual directed information} and its density with respect to message $W$ is defined as follows.
\begin{definition}(\textit{Residual Directed Information and Its Density})
\begin{equation}
I^R(X^n(W)\rightarrow Y^n)=I(X^n\rightarrow Y^n)-I(X^n\rightarrow Y^n||W).
\end{equation}
Equivalently,
\begin{equation}
I^R(X^n(W)\rightarrow Y^n)=I(X^n\rightarrow Y^n)-I(X^n\rightarrow Y^n|W).
\label{RDI}
\end{equation}
The residual directed information density is defined as
\begin{equation*}
i^R(X^n(W)\rightarrow Y^n)=i(X^n\rightarrow Y^n)-i(X^n\rightarrow Y^n|W)
\end{equation*}
\end{definition}

\indent The following theorem shows that the residual directed information captures the mutual information between the message and the channel outputs which we refer to the \textit{effective information flow}.
\begin{theorem}
If $X^n$ and $Y^n$ are the inputs and outputs, respectively, of a discrete channel with noisy feedback, as shown in Fig.\ref{figure1}, then
\begin{equation*}
I(W;Y^n)=I^R(X^n(W)\rightarrow Y^n)=I(X^n\rightarrow Y^n)-I(X^n\rightarrow Y^n|W).
\end{equation*}
\label{thm3_1}
\end{theorem}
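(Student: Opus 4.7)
The plan is to expand both directed informations on the right-hand side using the sum-decomposition $I(X^n\to Y^n)=\sum_{i=1}^n I(X^i;Y_i|Y^{i-1})$ (and its conditional analogue on $W$), and then to identify each term-by-term difference with a contribution to $I(W;Y^n)$ via Kolmogorov's formula (P5). Because $W$ is a single random variable available at time $0$, conditioning on $W$ causally and unconditionally coincide, so the identity $I(X^n\to Y^n\Vert W)=I(X^n\to Y^n|W)=\sum_{i=1}^n I(X^i;Y_i|Y^{i-1},W)$ can be used without further comment.

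The pivotal step is to apply Kolmogorov's formula to the joint set of ``conditioners'' $(X^i,W)$ with respect to $Y_i$ given $Y^{i-1}$, in two different orders:
\begin{equation*}
I(X^i,W;Y_i|Y^{i-1})=I(X^i;Y_i|Y^{i-1})+I(W;Y_i|X^i,Y^{i-1})=I(W;Y_i|Y^{i-1})+I(X^i;Y_i|Y^{i-1},W).
\end{equation*}
Rearranging gives
\begin{equation*}
I(X^i;Y_i|Y^{i-1})-I(X^i;Y_i|Y^{i-1},W)=I(W;Y_i|Y^{i-1})-I(W;Y_i|X^i,Y^{i-1}).
\end{equation*}
The key observation is then that the forward channel law $p(y_i|x^i,y^{i-1})$ does not depend on $W$, so $W-(X^i,Y^{i-1})-Y_i$ forms a Markov chain, yielding $I(W;Y_i|X^i,Y^{i-1})=0$. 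This Markov property is where the structure of the communication system (in particular, the fact that the channel is memoryless of $W$ given its inputs, and that the encoder's potential dependence on $W$ enters only through $X^i$) is actually invoked; I would expect this conditional-independence check to be the step that requires the most care, since one must verify it holds uniformly over $i$ regardless of how the noisy feedback $Z^{i-1}$ enters the encoder.

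Summing the resulting identity $I(X^i;Y_i|Y^{i-1})-I(X^i;Y_i|Y^{i-1},W)=I(W;Y_i|Y^{i-1})$ from $i=1$ to $n$ gives, on the left, exactly $I(X^n\to Y^n)-I(X^n\to Y^n|W)=I^R(X^n(W)\to Y^n)$ by the definitions, and, on the right, $\sum_{i=1}^n I(W;Y_i|Y^{i-1})=I(W;Y^n)$ by the ordinary chain rule for mutual information. This chains the two displayed equalities in the theorem in a single line. No further machinery (data processing, Fano-type arguments, or properties specific to noisy feedback) is needed; the proof is essentially a bookkeeping identity that isolates the single Markov relation imposed by the memoryless channel.
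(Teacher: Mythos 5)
Your proof is correct and is essentially the same argument as the paper's: both hinge on the identical Markov chain $W-(X^i,Y^{i-1})-Y_i$, and your application of Kolmogorov's formula (P5) per time step is just a repackaging of the paper's add-and-subtract of $H(Y_i|Y^{i-1},W,X^i)$ in entropy form. The only cosmetic difference is that you isolate the term $I(W;Y_i|X^i,Y^{i-1})$ and set it to zero, whereas the paper absorbs the same step as an equality of conditional entropies.
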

\begin{proof}
\begin{equation*}
\begin{split}
&I(W;Y^n)\\
=&H(Y^n)-H(Y^n|W)\\
=&\sum_{i=1}^{n}H(Y_i|Y^{i-1})-\sum_{i=1}^{n}H(Y_i|Y^{i-1},W)\\
=&\sum_{i=1}^{n}H(Y_i|Y^{i-1})-\sum_{i=1}^{n}H(Y_i|Y^{i-1},W,X^i)-(\sum_{i=1}^{n}H(Y_i|Y^{i-1},W)-\sum_{i=1}^{n}H(Y_i|Y^{i-1},W,X^i))\\
\stackrel{(a)}{=}&\sum_{i=1}^{n}H(Y_i|Y^{i-1})-\sum_{i=1}^{n}H(Y_i|Y^{i-1},X^i)-(\sum_{i=1}^{n}H(Y_i|Y^{i-1},W)-\sum_{i=1}^{n}H(Y_i|Y^{i-1},W,X^i)) \\
=&\sum_{i=1}^{n}I(X^i;Y_i|Y^{i-1})-\sum_{i=1}^{n}I(X^i;Y_i|Y^{i-1},W)\\
=&I(X^n\rightarrow Y^n)-I(X^n\rightarrow Y^n|W) \\
\stackrel{(b)}{=}&I^R(X^n(W)\rightarrow Y^n) \\
\end{split}
\end{equation*}
where (a) follows from the Markov chain $W - (X^i,Y^{i-1})- Y_i$. Line (b) follows from the definition of residual directed information.\\
\end{proof}

\begin{remark} This theorem implies that, for noisy feedback channels, the directed information $I(X^n\rightarrow Y^n)$ captures both the effective information flow (i.e. $I(W;Y^n)$) generated by the message and the redundant information flow (i.e. $I(X^n\rightarrow Y^n|W)$) generated by the \textit{feedback noise} (dummy message). Since only $I(W;Y^n)$ is the relevant quantity for channel capacity, the well-known directed information clearly fails to characterize the noisy feedback capacity.
\end{remark}

\indent In the following corollary, we explore some properties of the residual directed information.
\begin{corollary}
The residual directed information $I^R(X^n(W)\rightarrow Y^n)$ satisfies the following properties:
\begin{enumerate}
\item $I^R(X^n(W)\rightarrow Y^n)\geq 0$  (with equality if and only if the message set $W$ and channel outputs $Y^n$ are independent.)
\item $I^R(X^n(W)\rightarrow Y^n)\leq I(X^n\rightarrow Y^n)\leq I(X^n;Y^n)$.
\end{enumerate}
 The first equality holds if the feedback is perfect. The second equality holds if there is no feedback.
\label{col3_1}
\end{corollary}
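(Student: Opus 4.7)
The plan is to leverage Theorem \ref{thm3_1}, which identifies $I^R(X^n(W)\rightarrow Y^n)$ with $I(W;Y^n)$, and to reduce each remaining statement to properties of conditional mutual information or to Kolmogorov's formula (P5). For part (1), nonnegativity follows immediately from $I^R(X^n(W)\rightarrow Y^n) = I(W;Y^n) \geq 0$, and the equality condition collapses to the standard fact that $I(W;Y^n)=0$ iff $W$ and $Y^n$ are independent.

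For part (2), I would handle the two inequalities separately. For the first, $I^R(X^n(W)\rightarrow Y^n) \leq I(X^n\rightarrow Y^n)$, I would start from the defining identity (\ref{RDI}) and note that
\[
I(X^n\rightarrow Y^n \mid W) = \sum_{i=1}^n I(X^i; Y_i \mid Y^{i-1}, W) \geq 0,
\]
since each summand is a conditional mutual information; subtracting a nonnegative quantity yields the bound. For the second inequality, $I(X^n\rightarrow Y^n) \leq I(X^n;Y^n)$, I would apply Kolmogorov's formula (P5) term-by-term to write
\[
I(X^n; Y_i \mid Y^{i-1}) = I(X^i; Y_i \mid Y^{i-1}) + I(X_{i+1}^n; Y_i \mid X^i, Y^{i-1}),
\]
and sum over $i$ using the chain rule, giving $I(X^n;Y^n) = I(X^n\rightarrow Y^n) + \sum_{i=1}^n I(X_{i+1}^n; Y_i \mid X^i, Y^{i-1})$. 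The extra sum is nonnegative, yielding the claim.

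For the equality conditions, I would argue as follows. In the perfect feedback case, $Z^{i-1}=Y^{i-1}$ and the encoder is deterministic, so $X^i$ is a deterministic function of $(W, Y^{i-1})$. Consequently $H(Y_i \mid X^i, Y^{i-1}, W) = H(Y_i \mid Y^{i-1}, W)$ for every $i$, every term of $I(X^n\rightarrow Y^n \mid W)$ vanishes, and $I^R(X^n(W)\rightarrow Y^n) = I(X^n\rightarrow Y^n)$. In the no-feedback case, $X_{i+1}^n$ depends only on $W$ and not on $Y^i$, which gives the conditional independence $X_{i+1}^n \perp Y_i \mid X^i, Y^{i-1}$; this forces each term $I(X_{i+1}^n; Y_i \mid X^i, Y^{i-1})$ to vanish, recovering the classical Massey identity $I(X^n;Y^n) = I(X^n\rightarrow Y^n)$.

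I do not anticipate a substantial obstacle: each step is a direct consequence of Theorem \ref{thm3_1}, nonnegativity of conditional mutual information, or Kolmogorov's formula. The only mild care required is in the no-feedback equality, where one must recognize that ``no feedback'' translates precisely into the conditional independence $X_{i+1}^n \perp Y_i \mid X^i, Y^{i-1}$ that collapses the extra sum to zero.
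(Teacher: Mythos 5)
Your proof is correct and follows essentially the same route as the paper: reduce part (1) to $I(W;Y^n)\geq 0$ via Theorem \ref{thm3_1}, obtain the first inequality in part (2) by subtracting the nonnegative quantity $I(X^n\rightarrow Y^n\mid W)$, and invoke Massey's inequality for the second. The only difference is that you are more self-contained — you re-derive Massey's inequality via Kolmogorov's formula and supply short arguments for the two equality cases, whereas the paper cites \cite{Massey1990} for the former and simply asserts the latter.
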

\begin{proof}
\indent 1). Follows from Theorem \ref{thm3_1}, $I^R(X^n(W)\rightarrow Y^n)=I(W;Y^n)\geq 0$. The necessary and sufficient condition of $I^R(X^n(W)\rightarrow Y^n)=0$ is obvious by looking at $I(W;Y^n)$.\\
\indent 2). Since $I(X^n\rightarrow Y^n|W)=\sum_{i=1}^n I(X^i;Y_i|Y^{i-1},W)\geq 0$ (equality holds for the perfect feedback case),
\begin{equation*}
\begin{split}
I^R(X^n(W)\rightarrow Y^n)&=I(X^n\rightarrow Y^n)-I(X^n\rightarrow Y^n|W)\leq I(X^n\rightarrow Y^n)\\
\end{split}
\end{equation*}
\indent The proof of the second inequality $I(X^n\rightarrow Y^n)\leq I(X^n;Y^n)$ is presented in \cite{Massey1990}.
\end{proof}

\begin{figure}
\begin{center}
\includegraphics[scale=0.60]{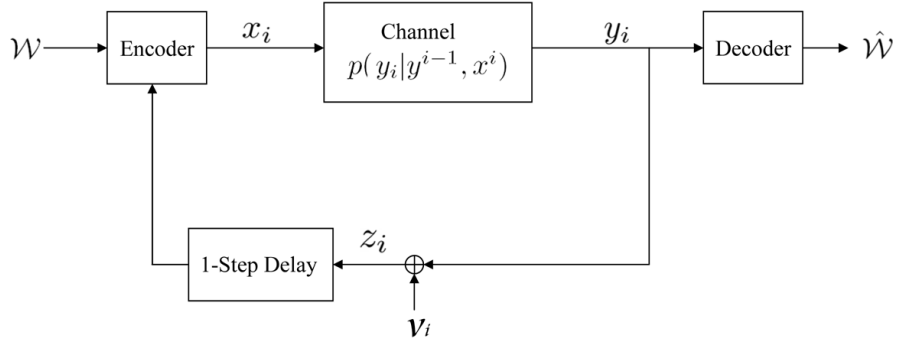}
\caption{Channels with additive noise feedback}
\label{figure5}
\end{center}
\end{figure}

\subsection{Information Flow in Noisy Feedback Channels}
\indent To gain more insight in the information flow of noisy feedback channels, we apply the new concept to channels with additive noise feedback and analyze its information flow. See Fig.\ref{figure5}. We present the time ordering of these random variables below\footnote{$Z_i$ is not shown in the time ordering since we have $Z_i=Y_i+V_i$.}.
\begin{equation*}
W,X_1,Y_1,V_1,X_2,Y_2,V_2,\cdots,X_{n-1},Y_{n-1},V_{n-1},X_n,Y_n,\hat{W}
\end{equation*}

\begin{corollary}
If $X^n$ and $Y^n$ are the inputs and outputs, respectively, of a discrete channel with additive noise feedback, as shown in Fig.\ref{figure5}, then
\begin{equation*}
I(X^n\rightarrow Y^n)=I(W;Y^n)+I(V^{n-1};Y^n)+I(W;V^{n-1}|Y^n)
\end{equation*}
\label{col3_3}
\end{corollary}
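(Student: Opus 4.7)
The plan is to rearrange the claimed identity into something that follows from Theorem \ref{thm3_1} plus an independence argument, and then verify the remaining piece by a term-by-term Kolmogorov/chain-rule computation that exploits the causal structure of the additive noise feedback loop. First, by Theorem \ref{thm3_1} we already know $I(W;Y^n)=I(X^n\rightarrow Y^n)-I(X^n\rightarrow Y^n|W)$, so substituting into the right-hand side of the target equality shows that the corollary is equivalent to
\begin{equation*}
I(X^n\rightarrow Y^n|W)=I(V^{n-1};Y^n)+I(W;V^{n-1}|Y^n).
\end{equation*}
Thus the whole task reduces to evaluating $I(X^n\rightarrow Y^n|W)$ in a form that exposes the feedback noise $V^{n-1}$.

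Next, because the message $W$ is generated independently of the feedback noise sequence $V^{n-1}$, I would use the two Kolmogorov expansions of $I(V^{n-1};W,Y^n)$,
\begin{equation*}
I(V^{n-1};W)+I(V^{n-1};Y^n|W)=I(V^{n-1};Y^n)+I(V^{n-1};W|Y^n),
\end{equation*}
and kill the first term using $I(V^{n-1};W)=0$. This collapses the right-hand side to $I(V^{n-1};Y^n|W)$. The problem is now reduced to showing $I(X^n\rightarrow Y^n|W)=I(V^{n-1};Y^n|W)$.

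For this last step I would work slice by slice. Expanding the directed information, $I(X^n\rightarrow Y^n|W)=\sum_{i=1}^n I(X^i;Y_i|Y^{i-1},W)$, while the chain rule gives $I(V^{n-1};Y^n|W)=\sum_{i=1}^n I(V^{n-1};Y_i|Y^{i-1},W)$. I would then apply Kolmogorov's formula (P5) to $I(X^i,V^{n-1};Y_i|Y^{i-1},W)$ in two different orders and match the slices. The key structural facts I would invoke are (i) in the additive-noise feedback setup of Fig.\ \ref{figure5}, the encoder is a deterministic map of $(W,Z^{i-1})=(W,Y^{i-1}+V^{i-1})$, so $X^i$ is a deterministic function of $(W,V^{i-1},Y^{i-1})$, which kills $I(X^i;Y_i|V^{n-1},Y^{i-1},W)$; and (ii) the Markov chain $V^{n-1}-(X^i,Y^{i-1})-Y_i$ holds because the forward channel obeys $p(y_i|x^i,y^{i-1})$ and the feedback noise lives on a separate time slot so it cannot causally influence $Y_i$ given $(X^i,Y^{i-1})$. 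Together these force $I(V^{n-1};Y_i|X^i,Y^{i-1},W)=0=I(X^i;Y_i|V^{n-1},Y^{i-1},W)$, so both Kolmogorov expansions equal $I(X^i,V^{n-1};Y_i|Y^{i-1},W)$ and the two slices agree.

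The main obstacle I expect is the justification of the Markov chain $V^{n-1}-(X^i,Y^{i-1})-Y_i$, because $V^{n-1}$ contains future noises $V_i,\dots,V_{n-1}$ whose joint distribution with the forward channel output $Y_i$ must be pinned down from the graphical model implicit in Fig.\ \ref{figure5}. I would address this by writing out the joint factorization over the time ordering $W,X_1,Y_1,V_1,\dots,X_n,Y_n$, noting that the distribution of $V_i$ is a function only of $(Y^i,V^{i-1})$ while $Y_{i}$ is drawn from $p(y_i|x^i,y^{i-1})$ independently of later feedback-noise draws, and then reading off the conditional independence directly from the factorization. Once this is in place the rest of the argument is just bookkeeping with P4, P5 and the chain rule.
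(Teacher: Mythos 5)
Your proof is correct, and while it lands on the same structural facts as the paper's, the organization is genuinely different. The paper does not invoke Theorem~\ref{thm3_1} as a black box; it re-runs the same $H(Y^n)$ telescoping from that proof, but inserts $\pm\sum_i H(Y_i|Y^{i-1},W,V^{i-1})$ in place of $\pm\sum_i H(Y_i|Y^{i-1},W,X^i)$. This yields $I(W;Y^n)=I(X^n\rightarrow Y^n)-I(V^{n-1}\rightarrow Y^n|W)$ in one sweep, and then the remaining work is to (b) replace the directed information $I(V^{n-1}\rightarrow Y^n|W)=\sum_i I(V^{i-1};Y_i|Y^{i-1},W)$ with $I(V^{n-1};Y^n|W)$ using the absence of feedback from $Y$ to $V$, and (c) split that by $V^{n-1}\perp W$. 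You instead cite Theorem~\ref{thm3_1} up front, do the $V^{n-1}\perp W$ split first to reduce to $I(X^n\rightarrow Y^n|W)=I(V^{n-1};Y^n|W)$, and then prove that identity slice-by-slice with a two-way Kolmogorov expansion of $I(X^i,V^{n-1};Y_i|Y^{i-1},W)$, which requires conditioning on the full $V^{n-1}$ (including the future slices $V_i,\dots,V_{n-1}$) rather than only $V^{i-1}$. That is a slightly stronger conditional independence than the paper ever uses, and you rightly flag it as the place needing justification. One small precision to add when you write this up: your step needs the chain $V^{n-1}-(X^i,Y^{i-1},W)-Y_i$ (with $W$ in the conditioning set), not merely $V^{n-1}-(X^i,Y^{i-1})-Y_i$; the joint factorization $p(w)\prod_j p(x_j|w,y^{j-1},v^{j-1})p(y_j|x^j,y^{j-1})p(v_j|v^{j-1})$ actually yields the even stronger chain $(W,V^{n-1})-(X^i,Y^{i-1})-Y_i$, which covers both. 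The net tradeoff: your route is more economical in that it reuses Theorem~\ref{thm3_1} verbatim, at the cost of establishing a marginally stronger Markov property; the paper's route never needs the future noise slices but repeats a derivation it has essentially already done.
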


\begin{proof}
We herein adopt a derivation methodology similar to the one in Theorem \ref{thm3_1}.
\begin{equation*}
\begin{split}
&I(W;Y^n)\\
=&H(Y^n)-H(Y^n|W)\\
=&\sum_{i=1}^{n}H(Y_i|Y^{i-1})-\sum_{i=1}^{n}H(Y_i|Y^{i-1},W)\\
=&\sum_{i=1}^{n}H(Y_i|Y^{i-1})-\sum_{i=1}^{n}H(Y_i|Y^{i-1},W,V^{i-1})-(\sum_{i=1}^{n}H(Y_i|Y^{i-1},W)-\sum_{i=1}^{n}H(Y_i|Y^{i-1},W,V^{i-1}))\\
\stackrel{(a)}{=}&\sum_{i=1}^{n}H(Y_i|Y^{i-1})-\sum_{i=1}^{n}H(Y_i|Y^{i-1},W,Z^{i-1})-(\sum_{i=1}^{n}H(Y_i|Y^{i-1},W)-\sum_{i=1}^{n}H(Y_i|Y^{i-1},W,V^{i-1}))\\
=&\sum_{i=1}^{n}H(Y_i|Y^{i-1})-\sum_{i=1}^{n}H(Y_i|Y^{i-1},X^{i})-(\sum_{i=1}^{n}H(Y_i|Y^{i-1},W)-\sum_{i=1}^{n}H(Y_i|Y^{i-1},W,V^{i-1}))\\
=&\sum_{i=1}^{n}I(X^i;Y_i|Y^{i-1})-\sum_{i=1}^{n}I(V^{i-1};Y_i|Y^{i-1},W)\\
=&I(X^n\rightarrow Y^n)-I(V^{n-1}\rightarrow Y^n|W)
\end{split}
\end{equation*}
where (a) follows from the fact that $Z^{i-1}=Y^{i-1}+V^{i-1}$. Next,
\begin{equation*}
\begin{split}
I(V^{n-1}\rightarrow Y^n|W)\stackrel{(b)}{=}&I(V^{n-1};Y^n|W) \\
=&H(V^{n-1}|W)-H(V^{n-1}|Y^n,W)\\
\stackrel{(c)}{=}&H(V^{n-1})-H(V^{n-1}|Y^n)+H(V^{n-1}|Y^n)-H(V^{n-1}|Y^n,W)\\
=&I(V^{n-1};Y^n)+I(W;V^{n-1}|Y^n)\\
\end{split}
\end{equation*}
where (b) follows from the fact that there exists no feedback from $Y^n$ to $V^{n-1}$ and (c) follows from the fact that the noise $V^{n-1}$ is independent from $W$. Putting previous equations together, the proof is complete.
\end{proof}

\begin{figure}
\begin{center}
\includegraphics[scale=0.55]{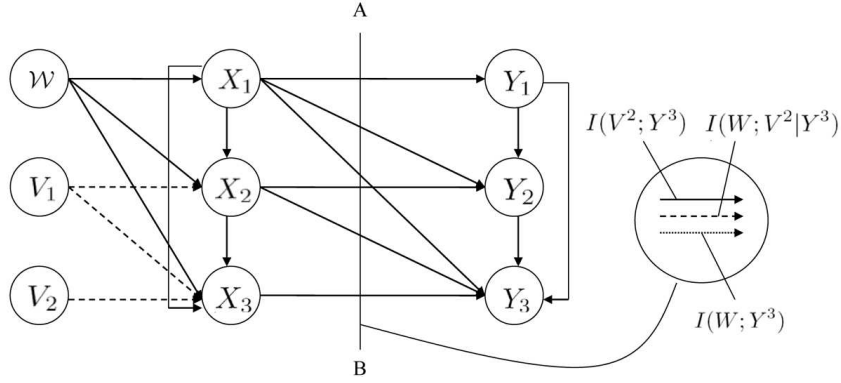}
\caption{The information flow of channels with additive noise feedback}
\label{figure2}
\end{center}
\end{figure}

\indent Corollary \ref{col3_3} allows us to explicitly interpret the information flow on a dependency graph (e.g. $N=3$). See Fig.\ref{figure2}. The solid lines from message $\mathcal{W}$ to sequence $X^3$ represent the dependence of $X^3$ on $W$. The dotted lines from additive noise $V^2$ to sequence $X^3$ represent the dependence of $X^3$ on $V^2$. The dependence of the channel inputs $X^3$ on the channel outputs $Y^2$ is not shown in the graph since the directed information only captures the information flow from $X^3$ to $Y^3$ \cite{Massey1990}. As it is shown in the zoomed circle, the directed information flow from $X^3$ to $Y^3$ (through cut $A-B$) implicitly contains three sub-information flows wherein the mutual information $I(W;Y^3)$ and $I(V^2;Y^3)$ measure the message-transmitting and the noise-transmitting information flows, respectively. The feedback noise $V^2$ is treated as a dummy message which also needs to be recovered by the decoder. The conditional mutual information $I(W;V^2|Y^3)$ quantifies the mixed information flow between the message-transmitting and noise-transmitting flows. Essentially, the second term in the residual directed information (i.e. $I(X^n\rightarrow Y^n|W)$) precisely captures the non-message transmitting information flows (i.e. $I(V^{n-1};Y^n)$ and $I(W;V^{n-1}|Y^n)$). Therefore, the residual directed information should be a proper measure to work with for channels with noisy feedback.\\
\indent Understanding the information flow in noisy feedback channels leads us to a higher level to investigate the noisy feedback problem and performs as the basis to develop fruitful results (to be seen later).

\section{Discrete Memoryless Channel With Noisy Feedback}
\label{sec_DMC_nfb}
\indent With the new concept and the picture of the information flow in hand, we now concentrate on DMC with noisy feedback. We show a negative yet fundamental result that the capacity is not achievable by using any non-trivial closed-loop encoder. In other words, exploiting the information from the feedback link is actually detrimental to achieving the maximal achievable rate. We first give some necessary definitions below.

\subsection{Discrete Memoryless Channel and Typical Closed-Loop Encoder}
\begin{definition}(\textit{Discrete Memoryless Channel})
A discrete memoryless channel is a discrete channel satisfying
\begin{equation*}
p(y_i|x^i,y^{i-1})=p(y_i|x_i)
\end{equation*}
\end{definition}

\begin{definition}(\textit{Typical Closed-Loop Encoder })
Given a channel $\lbrace p(y_i|x^i,y^{i-1})\rbrace_{i=1}^\infty$, a noisy feedback link $\lbrace p(z_i|y^i,z^{i-1})\rbrace_{i=1}^\infty$, an encoder is defined as a typical closed-loop encoder if it satisfies
\begin{equation*}
\liminf_{n\rightarrow \infty}\frac{1}{n}I(Z^{n-1}\rightarrow Y^n)>0.
\end{equation*}
For the additive noise feedback case as shown in Fig.\ref{figure5}, the condition is equivalent to
\begin{equation*}
\liminf_{n\rightarrow \infty}\frac{1}{n}I(V^{n-1}; Y^n)>0.
\end{equation*}
\label{def_typicalencoder}
\end{definition}
\begin{remark}
The equivalence is straightforward to check. That is,
\begin{equation*}
\begin{split}
\liminf_{n\rightarrow \infty}\frac{1}{n}I(Z^{n-1}\rightarrow Y^n)=&\liminf_{n\rightarrow \infty}\frac{1}{n}\sum_{i=1}^{n}H(Y_i|Y^i)-H(Y_i|Y^{i-1},Z^{i-1})\\
=&\liminf_{n\rightarrow \infty}\frac{1}{n}\sum_{i=1}^{n}H(Y_i|Y^i)-H(Y_i|Y^{i-1},V^{i-1})\\
=&\liminf_{n\rightarrow \infty}\frac{1}{n}I(V^{n-1}\rightarrow Y^n)\\
\stackrel{(a)}{=}&\liminf_{n\rightarrow \infty}\frac{1}{n}I(V^{n-1};Y^n).\\
\end{split}
\end{equation*}
where (a) follows the fact that there is no feedback from $Y$ to $V$ and thus the mutual information and the directed information coincide.
\end{remark}

\begin{remark}
This definition implies that a typical closed-loop encoder should non-trivially take feedback information $Z^{n-1}$ to produce channel inputs $X^{n}$ over time. It is easy to verify that an encoder is non-typical if it discards all feedback information (i.e. open-loop encoder) or only extracts feedback information for finite time instants.
\end{remark}

\begin{remark}
The typical closed-loop encoder is only well-defined under the assumption of typical noisy feedback (definition \ref{def_typcialnoise}). Otherwise, for any encoder, we have
\begin{equation*}
\begin{split}
\liminf_{n\rightarrow \infty}\frac{1}{n}I(Z^{n-1}\rightarrow Y^n)=&\liminf_{n\rightarrow \infty}\frac{1}{n}\sum_{i=1}^n I(Z^{i-1}; Y_i|Y^{i-1})\\
=&\liminf_{n\rightarrow \infty}\frac{1}{n}\sum_{i=1}^n H(Z^{i-1}|Y^{i-1})-H(Z^{i-1}|Y^{i})\\
\leq &\liminf_{n\rightarrow \infty}\frac{1}{n}\sum_{i=1}^n H(Z^{i-1}|Y^{i-1})\\
= &0.
\end{split}
\end{equation*}
\end{remark}

\indent Now, we present the main theorem of this section.
\begin{theorem}
The capacity $C_{FB}^{noise}$ of a discrete memoryless channel with noisy feedback equals the non-feedback capacity $C$. The capacity $C_{FB}^{noise}$ is not achievable by implementing any typical closed-loop encoder. Alternatively, any capacity-achieving encoder is non-typcial. Furthermore, the rate-loss by implementing a typical closed-loop encoder is lower bounded by $\liminf_{n\rightarrow \infty}\frac{1}{n}I(Z^{n-1}\rightarrow Y^n)$.\footnote{The ``rate-loss'' refers to the gap between the capacity $C$ and the achievable rate $R$. Given a channel $\lbrace p(y_i|x^i,y^{i-1})\rbrace_{i=1}^\infty$ and a noisy feedback link $\lbrace p(z_i|y^i,z^{i-1})\rbrace_{i=1}^\infty$, the value of $I(Z^{n-1}\rightarrow Y^n)$ only depends on the channel input distributions $\lbrace p(x_i|x^{i-1},z^{i-1})\rbrace_{i=1}^\infty$ induced by the implemented encoder.}
\label{Thm_4_1}
\end{theorem}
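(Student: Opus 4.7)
The plan is to derive a sharp upper bound on $\frac{1}{n}I(W;Y^n)$ via Fano's inequality and the decomposition from Theorem \ref{thm3_1}, and to read off all three assertions of the theorem from the resulting single inequality. The direction $C_{FB}^{noise}\geq C$ is free: ignoring the feedback link is a legal (non-typical) encoding strategy that achieves $C$ by the ordinary DMC channel coding theorem.

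For the matching upper bound I would start from $nR \leq I(W;Y^n) + n\epsilon_n$ and apply Theorem \ref{thm3_1} to rewrite $I(W;Y^n) = I(X^n\rightarrow Y^n) - I(X^n\rightarrow Y^n|W)$. The first term is at most $nC$ by a standard single-letter bound: using the DMC property $p(y_i|x^i,y^{i-1})=p(y_i|x_i)$, one gets $I(X^n\rightarrow Y^n) = \sum_i [H(Y_i|Y^{i-1}) - H(Y_i|X_i)]$, and each summand is dominated by $I(X_i;Y_i)\leq C$. So the whole argument hinges on the inequality $I(X^n\rightarrow Y^n|W) \geq I(Z^{n-1}\rightarrow Y^n)$, which I plan to establish in two short steps.

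First, the DMC property together with the fact that the encoder is a deterministic function of $(W,Z^{i-1})$ gives $I(X^i;Y_i|Y^{i-1},W)=I(X_i;Y_i|Y^{i-1},W)=I(Z^{i-1};Y_i|Y^{i-1},W)$, because all three quantities collapse to $H(Y_i|Y^{i-1},W)-H(Y_i|X_i)$; summing in $i$ yields $I(X^n\rightarrow Y^n|W) = I(Z^{n-1}\rightarrow Y^n|W)$. Second, for each $i$ the difference $I(Z^{i-1};Y_i|Y^{i-1},W) - I(Z^{i-1};Y_i|Y^{i-1})$ can be rewritten, via Kolmogorov's formula (P5) applied to the pair $(W,Z^{i-1})$ against $Y_i$ conditioned on $Y^{i-1}$, as $I(W;Y_i|Y^{i-1},Z^{i-1}) - I(W;Y_i|Y^{i-1})$. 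This is nonnegative thanks to the Markov chain $W - Y^{i-1} - Z^{i-1}$: because the feedback link $p(z_i|y^i,z^{i-1})$ is driven by noise independent of the message, conditioning on $Y^{i-1}$ decouples $Z^{i-1}$ from $W$, so $I(W;Z^{i-1}|Y^{i-1})=0$. A second application of Kolmogorov's identity then re-expresses the difference as $I(W;Z^{i-1}|Y^{i-1},Y_i)\geq 0$, and summing over $i$ closes the bound.

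Putting the pieces together, $R \leq C - \frac{1}{n}I(Z^{n-1}\rightarrow Y^n) + \epsilon_n$, and letting $n\to\infty$ gives the claimed rate-loss bound. The equality $C_{FB}^{noise}=C$ follows because the penalty term is always nonnegative and is driven to zero by any open-loop encoder, while Definition \ref{def_typicalencoder} immediately shows the penalty is strictly positive for any typical closed-loop encoder, ruling out capacity achievability in that class. The main obstacle I anticipate is the second step of the key inequality: identifying the correct Markov structure $W - Y^{i-1} - Z^{i-1}$ (which relies crucially on the one-step delay in the feedback path and the independence of the feedback noise from $W$) and executing the chain-rule manipulations in the right order so that the troublesome unsigned terms cancel cleanly.
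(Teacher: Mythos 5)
Your overall architecture matches the paper's: establish $C_{FB}^{noise}=C$ by sandwiching, apply Fano plus the Theorem~\ref{thm3_1} decomposition to get $nR\leq nC - I(X^n\rightarrow Y^n|W)+n\epsilon_n$, single-letterize the first term via the DMC property, and then show $I(X^n\rightarrow Y^n|W)\geq I(Z^{n-1}\rightarrow Y^n)$. Your first technical step (that the DMC property and deterministic encoding give $I(X^i;Y_i|Y^{i-1},W)=I(Z^{i-1};Y_i|Y^{i-1},W)$, hence $I(X^n\rightarrow Y^n|W)=I(Z^{n-1}\rightarrow Y^n|W)$) is correct and is actually a cleaner observation than what appears in the paper.

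However, your second step has a genuine gap: the Markov chain $W-Y^{i-1}-Z^{i-1}$ you invoke does \emph{not} hold once the encoder is closed-loop, i.e.\ for $i\geq 3$. The feedback noise being independent of $W$ is not enough. In the dependency graph, $X_j$ ($2\leq j\leq i-1$) is a collider with parents $W$ and $Z^{j-1}$, and $Y_j$ is a descendant of $X_j$; conditioning on $Y^{i-1}$ therefore \emph{opens} the path $W\rightarrow X_j\leftarrow Z^{j-1}$ (``explaining away''), so in general $I(W;Z^{i-1}|Y^{i-1})>0$. Consequently the per-step difference is $I(W;Z^{i-1}|Y^{i})-I(W;Z^{i-1}|Y^{i-1})$, which has no definite sign; only its \emph{sum over $i$} is nonnegative, and that requires a global, not a per-step, argument.

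The paper routes around this exactly at the point where you get stuck. Instead of arguing term by term, Lemma~\ref{lem_dmc} establishes the single global identity
\begin{equation*}
I(X^n\rightarrow Y^n|W)=I(Z^{n-1}\rightarrow Y^n)+I(W;Z^{n-1}|Y^n),
\end{equation*}
from which nonnegativity of the gap is immediate. The key ingredient there is the valid Markov structure $W-(Y^i,Z^{i-1})-Z_i$ (giving $I(W;Z_i|Y^i,Z^{i-1})=0$), combined with the chain-rule expansion of $I(W;(Y^n,Z^{n-1}))$ and Theorem~\ref{thm3_1}. You should replace your step~2 with this global identity; your step~1 then becomes an optional alternative reformulation, but is not by itself sufficient.
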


\begin{remark}
This negative result implies that it is impossible to find a capacity-achieving feedback coding scheme for DMC with noisy feedback whereas it is possible in perfect feedback case (e.g. Schalkwijk-Kailath scheme). For example, \cite{Martins08} has proposed a linear coding scheme for AWGN channel with bounded feedback noise and \cite{Chance10} has proposed a concatenated coding scheme for AWGN channel with noisy feedback. It is easy to check that both of these closed-loop encoders are typical and therefore both coding schemes cannot achieve the capacity unless, as discussed in \cite{Martins08,Chance10}, the feedback additive noise is shrinking to zero (i.e. non-typical noisy feedback).
\end{remark}

\begin{remark}
Theorem \ref{Thm_4_1} indicates that the noisy feedback is unfavorable in the sense of achievable rate. However, using noisy feedback still provides many benefits as mentioned in the Introduction. Furthermore, from a control theoretic point of view, (noisy) feedback is necessary for stabilizing unstable plants and achieving certain performances. Therefore, we need a tradeoff while using noisy feedback.
\end{remark}

\indent Before moving to prove the main theorem, we need the following lemma.
\begin{lemma}
For any typical closed-loop encoder,
\begin{equation*}
\liminf_{n\rightarrow \infty}\frac{1}{n}I(X^{n}\rightarrow Y^n|W)>0.
\end{equation*}
\label{lem_dmc}
\end{lemma}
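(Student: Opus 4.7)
The plan is to prove the per-$n$ identity
\begin{equation*}
I(X^n\rightarrow Y^n|W)=I(Z^{n-1}\rightarrow Y^n)+I(W;Z^{n-1}|Y^n),
\end{equation*}
from which the lemma follows at once: the second summand is a conditional mutual information, hence nonnegative, so $\liminf_{n\to\infty}\frac{1}{n}I(X^n\rightarrow Y^n|W)\geq\liminf_{n\to\infty}\frac{1}{n}I(Z^{n-1}\rightarrow Y^n)$, and the right-hand side is strictly positive by the typical closed-loop encoder hypothesis of Definition~\ref{def_typicalencoder}.

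First I would establish the decomposition
\begin{equation*}
I(X^n\rightarrow Y^n)=I(Z^{n-1}\rightarrow Y^n)+I(X^n\rightarrow Y^n||Z^{n-1}),
\end{equation*}
obtained by expanding $I(X^i,Z^{i-1};Y_i|Y^{i-1})$ via the chain rule in two orders and using the DMC property $p(y_i|x^i,y^{i-1})=p(y_i|x_i)$ to zero out the cross term $I(Z^{i-1};Y_i|X^i,Y^{i-1})$. Subtracting $I(W;Y^n)$ from both sides and invoking Theorem~\ref{thm3_1} converts the left-hand side into $I(X^n\rightarrow Y^n|W)$, so the target reduces to showing
\begin{equation*}
I(X^n\rightarrow Y^n||Z^{n-1})-I(W;Y^n)=I(W;Z^{n-1}|Y^n).
\end{equation*}

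To see this, note that the encoder being deterministic ($X_i=f_i(W,Z^{i-1})$) together with the DMC gives $I(X^i;Y_i|Z^{i-1},Y^{i-1})=I(W;Y_i|Z^{i-1},Y^{i-1})$, so that $I(X^n\rightarrow Y^n||Z^{n-1})=\sum_i I(W;Y_i|Z^{i-1},Y^{i-1})$. Kolmogorov's formula applied to $(W,Y_i,Z^{i-1})$ conditioned on $Y^{i-1}$ then rewrites each per-$i$ difference as
\begin{equation*}
I(W;Y_i|Z^{i-1},Y^{i-1})-I(W;Y_i|Y^{i-1})=I(W;Z^{i-1}|Y^i)-I(W;Z^{i-1}|Y^{i-1}).
\end{equation*}
The feedback-link model $p(z_i|y^i,z^{i-1})$ implies $Z_i$ is conditionally independent of $W$ given $(Z^{i-1},Y^i)$, which yields the pivotal identity $I(W;Z^i|Y^i)=I(W;Z^{i-1}|Y^i)$ for every $i\leq n-1$. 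Writing $u_i:=I(W;Z^{i-1}|Y^i)$, this identity asserts $I(W;Z^{i-1}|Y^{i-1})=u_{i-1}$, so the summed differences telescope:
\begin{equation*}
\sum_{i=1}^{n}\!\bigl[u_i-u_{i-1}\bigr]=u_n-u_0=I(W;Z^{n-1}|Y^n),
\end{equation*}
as claimed.

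The only delicate point is the telescoping: one must apply the feedback-Markov identity with a one-step index shift so that $I(W;Z^{i-1}|Y^{i-1})$ in the $i$th summand lines up with $u_{i-1}$, after which all interior terms cancel cleanly. Everything else reduces to standard chain-rule manipulations under the DMC and feedback structures assumed in the setup.
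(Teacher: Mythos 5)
Your proof is correct and lands on exactly the same pivotal identity as the paper, namely $I(X^n\rightarrow Y^n|W)=I(Z^{n-1}\rightarrow Y^n)+I(W;Z^{n-1}|Y^n)$, from which the lemma follows by nonnegativity of the second term and the typical-encoder hypothesis. What differs is the intermediate bookkeeping: the paper evaluates the block quantity $I(W;(Y^n,Z^{n-1}))$ directly (via $I(W;Z_i|Y^i,Z^{i-1})=0$ and a matching expansion of $I(X^n\rightarrow Y^n)-I(Z^{n-1}\rightarrow Y^n)$) and then plugs in Theorem~\ref{thm3_1}; you instead split $I(X^n\rightarrow Y^n)=I(Z^{n-1}\rightarrow Y^n)+I(X^n\rightarrow Y^n||Z^{n-1})$, convert $I(X^n\rightarrow Y^n||Z^{n-1})$ into $\sum_i I(W;Y_i|Z^{i-1},Y^{i-1})$ via encoder determinism and the channel Markov chain, and then telescope the per-step Kolmogorov differences using the feedback-link Markov identity $I(W;Z^i|Y^i)=I(W;Z^{i-1}|Y^i)$. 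The two routes rest on the same Markov structures; the paper's version is slightly more compact, while your telescoping makes it more transparent exactly where the feedback-link Markov chain does the work. One small inaccuracy: you attribute the vanishing of $I(Z^{i-1};Y_i|X^i,Y^{i-1})$ to the DMC assumption, but the general channel model $p(y_i|x^i,y^{i-1})$ already gives the Markov chain $Z^{i-1}-(X^i,Y^{i-1})-Y_i$, so this step (and indeed the whole identity) does not need $p(y_i|x^i,y^{i-1})=p(y_i|x_i)$; this is consistent with the paper, whose proof of the lemma never invokes memorylessness.
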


\begin{proof}
For any $0\leq i\leq n$, we have
\begin{equation*}
\begin{split}
I(W;Z_i|Y^i,Z^{i-1})=&H(Z_i|Y^i,Z^{i-1})-H(Z_i|Y^{i},Z^{i-1},W)\\
=&H(Z_i|Y^i,Z^{i-1})-H(Z_i|Y^{i},Z^{i-1})\\
=&0.\\
\end{split}
\end{equation*}
Then,
\begin{equation*}
\begin{split}
I(W;(Y^n,Z^{n-1}))=&I(W;(Y^n,Z^n))-I(W;Z_n|Y^n,Z^{n-1})\\
=&\sum_{i=1}^{n}I(W;(Y_i,Z_i)|Y^{i-1},Z^{i-1})\\
=&\sum_{i=1}^{n}I(W;Y_i|Y^{i-1},Z^{i-1})+I(W;Z_i|Y^i,Z^{i-1})\\
=&\sum_{i=1}^{n}H(Y_i|Y^{i-1},Z^{i-1})-H(Y_i|Y^{i-1},Z^{i-1},W)\\
=&\sum_{i=1}^{n}H(Y_i|Y^{i-1},Z^{i-1})-H(Y_i|Y^{i-1},Z^{i-1},W,X^i)\\
=&\sum_{i=1}^{n}H(Y_i|Y^{i-1},Z^{i-1})-H(Y_i|Y^{i-1},X^i)\\
\end{split}
\end{equation*}
We investigate another equality as follows.
\begin{equation*}
\begin{split}
&I(X^{n}\rightarrow Y^n)-I(Z^{n-1}\rightarrow Y^n)\\
=&\sum_{i=1}^{n}H(Y_i|Y^i)-H(Y_i|Y^{i-1},X^{i})-H(Y_i|Y^i)+H(Y_i|Y^{i-1},Z^{i-1})\\
=&\sum_{i=1}^{n}H(Y_i|Y^{i-1},Z^{i-1})-H(Y_i|Y^{i-1},X^{i})\\
\end{split}
\end{equation*}

Combine the above equalities, we have
\begin{equation*}
\begin{split}
I(Z^{n-1}\rightarrow Y^n)=&I(X^{n}\rightarrow Y^n)-I(W;(Y^n,Z^{n-1}))\\
\stackrel{(a)}{=}&I(W;Y^n)+I(X^n\rightarrow Y^n|W)-I(W;(Y^n,Z^{n-1}))\\
=&I(X^n\rightarrow Y^n|W)-I(W;Z^{n-1}|Y^n)\\
\end{split}
\end{equation*}
where $(a)$ follows from Theorem \ref{thm3_1}. According to the definition of typical closed-loop encoder, the proof is complete.
\end{proof}

\indent Now we are ready to prove Theorem \ref{Thm_4_1}.
\begin{proof}
\indent Firstly, we prove that
\begin{equation*}
C_{FB}^{noise}=C=\max_{p(x)}I(X;Y)
\end{equation*}
\indent Since a nonfeedback channel code is a special case of a noisy feedback channel code, any rate that can be achieved without feedback can be achieved with noisy feedback. Therefore, we have $C_{FB}^{noise}\geq C$. Given a noisy feedback link, we clearly have $C_{FB}^{noise}\leq C_{FB}$ where $C_{FB}$ is the capacity of channels with perfect feedback. As $C=C_{FB}$ for DMC \cite{shannon56}, we have $C_{FB}^{noise}=C=\max_{p(x)}I(X;Y)$.\\
\indent Next, we show that for any typical closed-loop encoder, the achievable rates $R$ are strictly less then $C$ and the difference is lower bounded by  $\liminf_{n\rightarrow \infty}\frac{1}{n}I(Z^{n-1}\rightarrow Y^n)$. Let $\mathcal{W}$ be uniformly distributed over $\lbrace 1,2,\cdots,2^{nR} \rbrace$ and $P_e^{(n)}=Pr(W\neq\hat{W})$ with $P_e^{(n)}\rightarrow 0$ as $n\rightarrow \infty$. Then
\begin{equation*}
\begin{split}
nR&=H(W)\\
&=H(W|\hat{W})+I(W;\hat{W})\\
&\stackrel{(a)}{\leq} 1+P_e^{(n)}nR+I(W;\hat{W})\\
&\stackrel{(b)}{\leq} 1+P_e^{(n)}nR+I(W;Y^n) \\
\end{split}
\end{equation*}
where (a) and (b) follow from Fano's inequality and Data-processing inequality, respectively.\\
\indent Next,
\begin{equation*}
\begin{split}
I(W;Y^n)&=I^R(X^n(W)\rightarrow Y^n)\\
&=I(X^n\rightarrow Y^n)-I(X^n\rightarrow Y^n|W)\\
&=\sum_{i=1}^{n}H(Y_i|Y^{i-1})-\sum_{i=1}^{n}H(Y_i|X^i,Y^{i-1})-I(X^n\rightarrow Y^n|W)\\
&\stackrel{(c)}{=}\sum_{i=1}^{n}H(Y_i|Y^{i-1})-\sum_{i=1}^{n}H(Y_i|X_i)-I(X^n\rightarrow Y^n|W)\\
&\stackrel{(d)}{\leq}\sum_{i=1}^{n}H(Y_i)-\sum_{i=1}^{n}H(Y_i|X_i)-I(X^n\rightarrow Y^n|W)\\
&=\sum_{i=1}^{n}I(X_i;Y_i)-I(X^n\rightarrow Y^n|W)\\
&\leq nC-I(X^n\rightarrow Y^n|W)\\
\end{split}
\end{equation*}
where (c) follows from the definition of DMC and (d) follows from the fact that removing conditioning increases entropy.\\
\indent Putting these together, we have
\begin{equation*}
R\leq \frac{1}{n}+P_e^{(n)}R+C-\frac{1}{n}I(X^n\rightarrow Y^n|W)
\end{equation*}
Therefore,
\begin{equation*}
\begin{split}
R&\leq \liminf_{n\rightarrow\infty}\lbrace\frac{1}{n}+P_e^{(n)}R+C-\frac{1}{n}I(X^n\rightarrow Y^n|W)\rbrace\\
&= C-\liminf_{n\rightarrow\infty}\frac{1}{n}I(X^n\rightarrow Y^n|W)\\
\end{split}
\end{equation*}

\indent According to the proof of Lemma \ref{lem_dmc}, we have
\begin{equation*}
\begin{split}
R&\leq  C-\liminf_{n\rightarrow\infty}\frac{1}{n}(I(Z^{n-1}\rightarrow Y^n)+I(W;Z^{n-1}|Y^n))\\
&\leq C-\liminf_{n\rightarrow\infty}\frac{1}{n}I(Z^{n-1}\rightarrow Y^n)\\
\end{split}
\end{equation*}
\indent By the definition of typical closed-loop encoder, the proof is complete.
\end{proof}

\begin{figure}
\begin{center}
\includegraphics[scale=0.75]{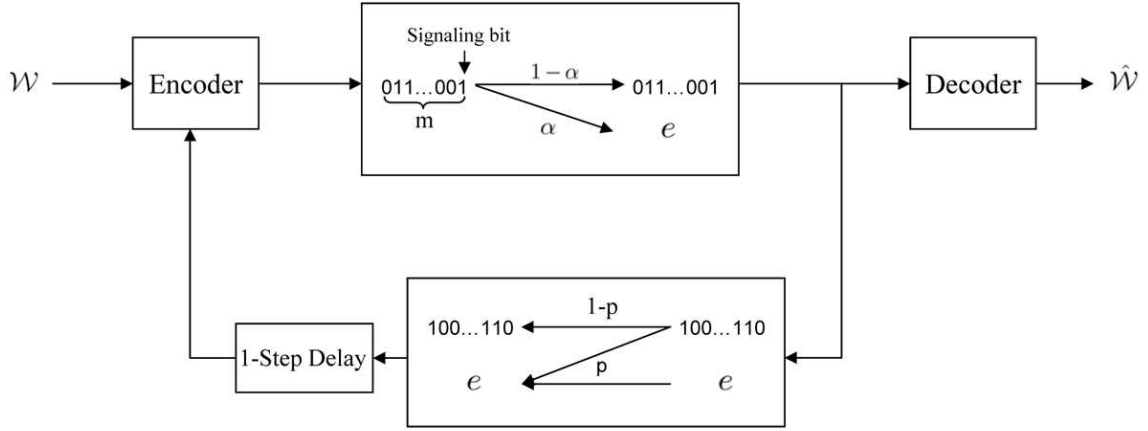}
\caption{Binary codeword erasure channel/feedback}
\label{figure7}
\end{center}
\end{figure}
\subsection{Example}
\indent We give an example of communication through DMC with typical noisy feedback, from which we may get insight on how feedback ``noise'' reduces effective transmission rate and how signaling helps rebuild the coordination between the transmitter and the receiver. Consider a binary codeword erasure channel (BCEC) with a noisy feedback as shown in Fig.\ref{figure7}. The channel input is a m-bit codeword. This input codeword will be reliably transmitted with probability $1-\alpha$, and otherwise get erased with probability $\alpha$. Similarly, we assume a noisy feedback link with erasure probability $p$. It is obvious that the capacity of this channel is $C_{FB}^{noise}=m(1-\alpha)$. One simple but nonoptimum encoding strategy is the following: use the first bit in every m-bit codeword as a signaling bit (i.e. $1$ refers to a retransmitted m-bit codeword while $0$ refers to a new one). If the output of the feedback link is $e$, the encoder will retransmit the previous codeword with signaling ``1'', otherwise, transmit the next codeword with signaling ``0''. Under this strategy, the decoder can recover the message with arbitrarily small error due to the signaling bit. Next, we analyze the achievable rate of this strategy. Assume that $n$ bits information need to be transmitted and $n$ is sufficient large. Then $\alpha n$ bits will be lost and $(1-\alpha)n$ bits will reliably get through. Due to the noisy feedback, the encoder will retransmit $b_1=\alpha n+p(1-\alpha)n$ bits. Similarly, $\alpha b_1$ bits will be lost and $(1-\alpha)b_1$ bits will get through. Then the encoder will retransmit $b_2=\alpha b_1+p(1-\alpha)b_1$ bits. After retransmit $t$ times with $t\rightarrow \infty$, the achievable transmission rate $R$ is
\begin{equation*}
\begin{split}
R&=\lim_{t\rightarrow \infty} \frac{\log{2^{n}}}{\frac{1}{m-1}(n+b_1+b_2+\cdots+b_t)}\\
&=\lim_{t\rightarrow \infty} \frac{n(m-1)}{n+\frac{(\alpha n+p(1-\alpha)n)(1-\alpha+p(1-\alpha))^t}{1-\alpha+p(1-\alpha))}}\\
&=\frac{n(m-1)}{n+\frac{(\alpha n+p(1-\alpha)n)}{1-(\alpha+p(1-\alpha))}}\\
&=(m-1)(1-p)(1-\alpha)\\
\end{split}
\end{equation*}
Then we have
\begin{equation*}
\frac{R}{C_{FB}^{noise}}=(1-p)(1-\frac{1}{m}).
\end{equation*}

\indent Here, it shows that the loss of transmission rate is caused by two factors: uncertainty in the feedback link and signaling in the forward channel. If $p=0$ (i.e. perfect feedback) and $m\rightarrow \infty$ (i.e. the signaling bit could be neglected), we have $R=C_{FB}^{noise}$. Additionally, we should notice an interesting fact in this example that the loss of effective transmission rate is independent of the noise in the forward channel.

\section{A Channel Coding Theorem and Computable Bounds on the Capacity}
\label{sec_channelcoding}
\begin{figure}
\begin{center}
\includegraphics[scale=0.70]{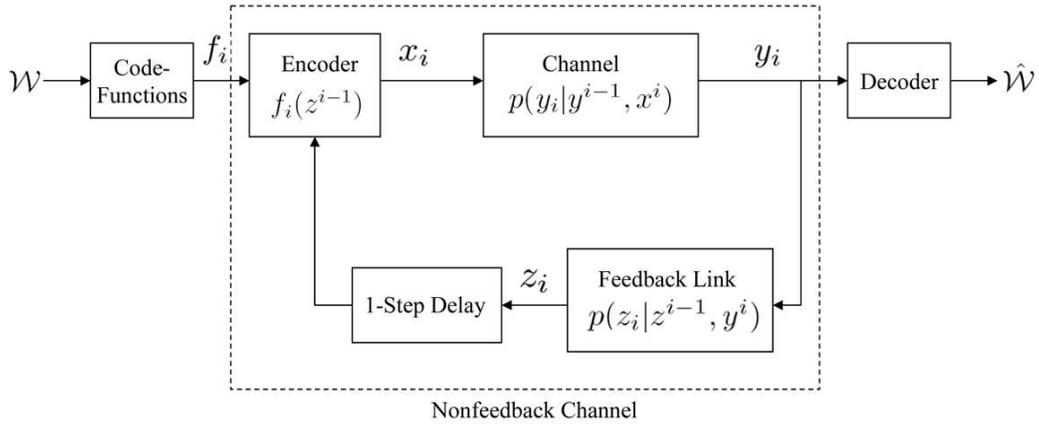}
\caption{Channels with noisy feedback (a code-function representation)}
\label{figure4}
\end{center}
\end{figure}
\indent In this section, we first show that the residual directed information can be used to characterize the capacity of finite alphabet channels with noisy feedback. As we will discuss, this characterization has nice features and provides much insight in the noisy feedback capacity. However, the computation of this characterization is in general intractable. We then propose computable bounds which are characterized by the causal conditional directed information.\\
\indent We first formulate the channel coding problem. Here, we require the use of code-functions as opposed to codewords, as shown in Fig.\ref{figure4}. Briefly, at time $0$, we choose a message from a message set $\mathcal{W}$. This message is associated with a sequence of code-functions. Then from time $1$ to $n$, we use the channels to transmit information sequentially based on the corresponding code-function. At time $n+1$, we decode the message as $\hat{\mathcal{W}}$. We now give a formal definition of this communication scheme, which extends the description presented in \cite{Tati09}.
\begin{definition}(\textit{Communication Scheme for Channels with Noisy Feedback: A Code-function Representation})\\
\indent $1$. A message set is a set $\mathcal{W}\in\lbrace1,2,\cdots,M\rbrace$\\
\indent $2$. A channel code-function is a sequence of $n$ deterministic measurable maps $f^n=\lbrace f_i\rbrace_{i=1}^n$ ($f\in \mathcal{F}$) such that $f_i: \mathcal{Z}^{i-1} \rightarrow \mathcal{X}$ which takes $z^{i-1}\mapsto x_i$. \\
\indent $3$. A channel encoder is a set of $M$ channel code-functions, denoted by $\lbrace f^n[w]\rbrace_{w=1}^M$.\\
\indent $4$. A channel is a family of conditional probability $\lbrace p(y_i|x^i,y^{i-1})\rbrace_{i=1}^n$.\\
\indent $5$. A noisy feedback link is a family of conditional probability $\lbrace p(z_i|y^i,z^{i-1})\rbrace_{i=1}^n$.\\
\indent $6$. A channel decoder is a map $g:\mathcal{Y}^n \rightarrow \mathcal{W}$ which takes $y^{n}\mapsto w$.\\
\label{def5_01}
\end{definition}

\indent Based on the above communication scheme, we redefine the channel code and $\epsilon$-achievable rate in terms of code-functions.
\begin{definition}(\textit{Channel Code})
A $(n,M,\epsilon)$ channel code over time horizon n consists of $M$ code-functions $\lbrace f^n[w]\rbrace_{w=1}^{M}$, a channel decoder $g$, and an error probability satisfying
\begin{equation*}
\frac{1}{M}\sum_{w=1}^M p(w\neq g(y^n)|w)\leq \epsilon
\end{equation*}
\end{definition}

\begin{definition}(\textit{$\epsilon$-achievable Rate})
$R\geq 0$ is an $\epsilon$-achievable rate if, for every $\epsilon>0$, there exist, for all sufficiently large n, a $(n,M,\epsilon)$ channel code with rate
\begin{equation*}
\frac{\log M}{n}\geq R-\epsilon
\end{equation*}
The maximum $\epsilon$-achievable rate is called the $\epsilon$-capacity, denoted by $C_{FB}^{noise}(\epsilon)$. The channel capacity $C_{FB}^{noise}$ is defined as the maximal rate that is $\epsilon$-achievable for all $0<\epsilon <1$. Clearly, $C_{FB}^{noise}=\lim_{\epsilon\rightarrow 0}C_{FB}^{noise}(\epsilon)$
\end{definition}

\indent The channel coding problem is to search for a sequence of $(n,M,\epsilon)$ channel codes under which the achievable rate is maximized as $n$ goes to $\infty$. In order to construct a general channel coding theorem (i.e. no restrictions on channels and input/output alphabets, such as stationary, ergodic, $\cdots$), we introduce the following two probabilistic limit operations \cite{bookhan03}.
\begin{definition}(\textit{Probabilistic Limit})
The limit superior in probability for any sequence $(X_1,X_2,\cdots)$ is defined by
\begin{equation*}
p-\limsup_{n\rightarrow\infty}X_n=\inf\lbrace \alpha |\lim_{n\rightarrow\infty} Prob\lbrace X_n>\alpha \rbrace =0\rbrace
\end{equation*}
Similarly, the limit inferior in probability for any sequence $(X_1,X_2,\cdots)$ is defined by
\begin{equation*}
p-\liminf_{n\rightarrow\infty}X_n=\sup\lbrace \beta |\lim_{n\rightarrow\infty} Prob\lbrace X_n<\beta \rbrace =0\rbrace
\end{equation*}
\end{definition}

\indent Next, we introduce some notations.
\begin{equation*}
\begin{split}
\underline{I}(X;Y)&=p-\liminf_{n\rightarrow\infty}\frac{1}{n}i(X^n;Y^n)\\
\overline{I}(X;Y)&=p-\limsup_{n\rightarrow\infty}\frac{1}{n}i(X^n;Y^n)\\
\underline{I}^R(X(F)\rightarrow Y)&=p-\liminf_{n\rightarrow\infty}\frac{1}{n}i^R(X^n(F^n)\rightarrow Y^n)\\
\overline{I}^R(X(F)\rightarrow Y)&=p-\limsup_{n\rightarrow\infty}\frac{1}{n}i^R(X^n(F^n)\rightarrow Y^n)\\
\end{split}
\end{equation*}

\indent As done in \cite{Tati09}, it is convenient to consider the noisy feedback channel problem as a regular nonfeedback problem from the input alphabet $\mathcal{F}$ and output alphabet $\mathcal{Y}$ as shown in Fig.\ref{figure4}. This consideration provides us with an approach to prove the channel coding theorem for channels with noisy feedback. Recall that the capacity of nonfeedabck channels is characterized as follows \cite{Verdu94}.
\begin{theorem}(Non-feedback Channel Capacity)
For any channel with arbitrary input and output alphabets $\mathcal{F}$ and $\mathcal{Y}$, the channel capacity $C$ is given by
\begin{equation*}
C=\sup_{F}\underline{I}(F;Y)
\end{equation*}
\label{thm4_1}
where $\sup_{F}$ denotes the supremum with respect to all the input processes $F$.
\end{theorem}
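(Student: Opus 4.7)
This is precisely the Verdú--Han general capacity formula for (nonfeedback) channels with abstract input and output alphabets, and I would follow the information-spectrum methodology pioneered in \cite{Verdu94} and systematized in \cite{bookhan03}. The proof splits naturally into an achievability half and a converse half; each rests on a two-sided bound relating the average error probability of a code to the distribution of the normalized information density $\tfrac{1}{n}i(F^n;Y^n)$ under an appropriately chosen joint law. The overall strategy is: prove a Feinstein-type existence lemma for the direct part, prove a Verdú--Han converse lemma that lower-bounds every code's error in terms of the same density, and then connect both to $\underline{I}(F;Y)=p$-$\liminf_{n}\tfrac{1}{n}i(F^n;Y^n)$.

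For achievability, I would fix any $R<\sup_F\underline{I}(F;Y)$, pick an input process $F$ with $\underline{I}(F;Y)>R$, and invoke the selection lemma: for every joint distribution $P_{F^n Y^n}$, every $\gamma>0$, and every integer $M$, there exists an $(n,M,\epsilon_n)$ code with
\begin{equation*}
\epsilon_n \;\leq\; P\!\left[\tfrac{1}{n}i(F^n;Y^n)\leq \tfrac{1}{n}\log M+\gamma\right] + 2^{-n\gamma}.
\end{equation*}
Setting $\tfrac{1}{n}\log M=R+\delta$ with $\delta,\gamma$ small enough that $R+\delta+\gamma<\underline{I}(F;Y)$, the definition of $p$-$\liminf$ forces the probability term to vanish as $n\to\infty$, while $2^{-n\gamma}\to 0$ trivially; hence $R$ is achievable and $C\geq \sup_F\underline{I}(F;Y)$.

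For the converse, I would use the complementary Verdú--Han bound: for any $(n,M,\epsilon_n)$ code, letting $P_{F^n}$ be uniform on the codebook and $\gamma>0$,
\begin{equation*}
\epsilon_n \;\geq\; P\!\left[\tfrac{1}{n}i(F^n;Y^n)\leq \tfrac{1}{n}\log M-\gamma\right] - 2^{-n\gamma}.
\end{equation*}
If a rate $R>\sup_F\underline{I}(F;Y)$ were achievable, then for the codebook-induced distribution the event in brackets would have probability bounded away from zero for infinitely many $n$ (since $\tfrac{1}{n}\log M\to R$ exceeds every $\underline{I}(F;Y)$), contradicting $\epsilon_n\to 0$. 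Taking $\gamma,\delta\to 0$ closes the gap and yields $C\leq \sup_F\underline{I}(F;Y)$.

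The main obstacle---and where essentially all the technical labor lives---is proving the two density-based bounds above. Achievability uses a threshold/maximum-likelihood decoder together with a random-coding argument and is fairly standard; the subtler step is the converse, where one partitions the error event by whether the channel output lands in the decoding region associated with the transmitted code-function and then performs a change-of-measure argument to control the contribution of atypical information-density values. Once these two pillars are established, the remaining work---optimizing over input processes $F$, sending $\gamma,\delta\to 0$, and passing between suprema and maxima under the finite-alphabet assumption---is routine. A minor bookkeeping point specific to this paper's setup is that the ``input'' $F$ ranges over code-functions $f^n\in\mathcal{F}$, but since $\mathcal{F}$ is treated as an abstract alphabet with the induced law $P_{Y^n\mid F^n}$ (obtained by coupling the forward channel and the noisy feedback link), the Verdú--Han theorem applies verbatim.
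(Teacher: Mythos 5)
Your proposal correctly identifies this as the Verdú--Han general capacity formula and sketches the standard information-spectrum proof (Feinstein-type achievability bound plus Verdú--Han converse bound, both in terms of the distribution of $\tfrac{1}{n}i(F^n;Y^n)$). The paper does not prove Theorem~\ref{thm4_1} itself but cites it directly from \cite{Verdu94}, and the two information-density lemmas you outline are precisely the ones the paper later generalizes (Lemmas~\ref{lemma4_4} and \ref{lemma4_5}) to prove its own Theorem~\ref{thm4_2}; so your route matches the intended one.
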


\indent However, before applying the above result, we need understand the inherent connection between the equivalent nonfeedabck channel and the original channel with noisy feedback link. Moreover, as supremizing the mutual information over code-function $F$ is inconvenient, we need create a connection between the nonfeedback channel input distribution $\lbrace p(f^n)\rbrace$ and the original channel input distribution such that we can still work on the original channel input. These two issues are the main technical steps toward the channel coding theorem. We provide these results as lemmas in the next subsection. Then, we prove the channel coding theorem along the lines of the proof of Theorem \ref{thm4_1}.

\subsection{Technical Lemmas}
\indent We first show an equality of information densities between the nonfeedback channel $\mathcal{F}^n\rightarrow \mathcal{Y}^n$ and the original channel $\mathcal{X}^n\rightarrow \mathcal{Y}^n$.
\begin{lemma}
\begin{equation*}
i(F^n;Y^n)=i^R(X^n(F^n)\rightarrow Y^n)
\end{equation*}
where $i^R(X^n(F^n)\rightarrow Y^n)$ is defined as
\begin{equation*}
i^R(X^n(F^n)\rightarrow Y^n)=i(X^n\rightarrow Y^n)-i(X^n\rightarrow Y^n||F^n).
\end{equation*}
\label{lemma4_3}
\end{lemma}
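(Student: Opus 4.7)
The plan is to expand both sides as pointwise logarithms of probabilities and reduce the identity to two structural facts. Starting from the definition,
\begin{equation*}
i^R(X^n(F^n)\rightarrow Y^n) = \log\frac{\overrightarrow{p}(Y^n|X^n)}{p(Y^n)} - \log\frac{\overrightarrow{p}(Y^n|X^n,F^n)}{\overrightarrow{p}(Y^n|F^n)},
\end{equation*}
so it suffices to show (i) $\overrightarrow{p}(Y^n|X^n,F^n)=\overrightarrow{p}(Y^n|X^n)$ and (ii) $\overrightarrow{p}(Y^n|F^n)=p(Y^n|F^n)$. Substituting these, the two $\overrightarrow{p}(Y^n|X^n)$ factors cancel and the difference collapses to $\log\frac{p(Y^n|F^n)}{p(Y^n)}=i(F^n;Y^n)$, which is exactly what is claimed.

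For (i), I would verify the Markov chain $F^n - (X^i, Y^{i-1}) - Y_i$ for every $i$. This holds because the physical channel $p(y_i|x^i,y^{i-1})$ depends only on $x^i$ and $y^{i-1}$, while $F^n$ enters only indirectly through the inputs $X^i = (f_1(\emptyset), f_2(Z^1), \ldots, f_i(Z^{i-1}))$. Hence $p(y_i|x^i,y^{i-1},f^n)=p(y_i|x^i,y^{i-1})$, and multiplying over $i=1,\ldots,n$ produces (i). For (ii), the code-function $F^n$ is drawn at time $0$ before transmission begins, so placing $F^n$ before every $X_i,Y_i$ in the time ordering makes each component of $F^n$ ``past'' at each step $i$. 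Causal conditioning on $F^n$ therefore coincides with ordinary conditioning, and the chain rule yields $\overrightarrow{p}(Y^n|F^n)=\prod_i p(y_i|y^{i-1},f^n)=p(Y^n|F^n)$.

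The main subtlety is conceptual rather than computational: Definition \ref{def6} was stated for the interleaved ordering $X_1,Y_1,Z_1,\ldots,X_n,Y_n,Z_n$, not for a batch object such as $F^n$ that is revealed entirely at time $0$. I would resolve this by fixing an extended time ordering in which all of $F^n$ precedes $X_1$; under this convention both (i) and (ii) are immediate, and the lemma follows from one line of algebra on the information densities. No additional machinery appears to be required.
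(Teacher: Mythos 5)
Your proof is correct and rests on the same two structural facts that drive the paper's computation: the channel Markov chain $F^n-(X^i,Y^{i-1})-Y_i$ (the paper's step (b)), and the absence of feedback from $Y$ to $F$ because $F^n$ is fixed at time $0$ (the paper's step (a)), which is exactly what makes causal conditioning on $F^n$ collapse to ordinary conditioning. The paper expands forward from $i(F^n;Y^n)$ while you simplify backward from $i^R(X^n(F^n)\rightarrow Y^n)$, but this is essentially the same argument; your explicit remark about fixing the extended time ordering in which all of $F^n$ precedes $X_1$ makes precise a point the paper handles only tacitly by writing $F^i$ in its causal products.
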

\begin{proof}
\begin{equation*}
\begin{split}
i(F^n;Y^n)&=\log \frac{p(F^n,Y^n)}{p(F^n)p(Y^n)}\\
&=\log \frac{\prod_{i=1}^{n}p(F_i,Y_i|F^{i-1},Y^{i-1})}{p(F^n)p(Y^n)}\\
&=\log \frac{\prod_{i=1}^{n}p(Y_i|F^{i},Y^{i-1})p(F_i|F^{i-1},Y^{i-1})}{p(F^n)p(Y^n)}\\
&\stackrel{(a)}{=}\log \frac{\prod_{i=1}^{n}p(Y_i|F^{i},Y^{i-1})p(F_i|F^{i-1})}{p(F^n)p(Y^n)}  \\
&=\log \frac{\vec{p}(Y^n|F^n,X^n)}{p(Y^n)}-\log \frac{\vec{p}(Y^n|F^n,X^n)}{\prod_{i=1}^{n}p(Y_i|F^{i},Y^{i-1})}\\
&=\log \frac{\prod_{i=1}^n p(Y_i|F^i,X^i,Y^{i-1})}{p(Y^n)}-\log \frac{\vec{p}(Y^n|F^n,X^n)}{\prod_{i=1}^n p(Y_i|Y^{i-1},F^i)}\\
&\stackrel{(b)}{=}\log \frac{\prod_{i=1}^n p(Y_i|X^i,Y^{i-1})}{p(Y^n)}-\log \frac{\vec{p}(Y^n|F^n,X^n)}{\vec{p}(Y^n|F^n)}\\
&=\log \frac{\vec{p}(Y^n|X^n)}{p(Y^n)}-\log \frac{\vec{p}(Y^n|F^n,X^n)}{\vec{p}(Y^n|F^n)}\\
&=i(X^n\rightarrow Y^n)-i(X^n\rightarrow Y^n||F^n) \\
&=i^R(X^n(F^n)\rightarrow Y^n)\\
\end{split}
\end{equation*}
where (a) follows from the fact that no feedback exists from $\mathcal{Y}$ to $\mathcal{F}$. Line (b) follows from the Markov chain $F^i - (X^i,Y^{i-1})- Y_i$.
\end{proof}

\indent In the next lemma, we shows that there exists a suitable construction of $p(f^n)$ such that the induced channel input distribution equals the original channel input distribution. As we will see, this result allows us to work on the channel input distributions instead of code-function distributions.
\begin{lemma}
Given a channel $\lbrace p(y_i|x^i,y^{i-1})\rbrace_{i=1}^n$, a feedback link $\lbrace p(z_i|y^{i},z^{i-1})\rbrace_{i=1}^n$, a channel input distribution $\lbrace p(x_i|x^{i-1},z^{i-1})\rbrace_{i=1}^n$ and a sequence of code-function distributions $\lbrace p(f_i|f^{i-1})\rbrace_{i=1}^n$, the induced channel input distribution $\lbrace p_{ind}(x_i|x^{i-1},z^{i-1})\rbrace_{i=1}^n$ (induced by $\lbrace p(f_i|f^{i-1})\rbrace_{i=1}^n$) equals the original channel input distribution $\lbrace p(x_i|x^{i-1},z^{i-1})\rbrace_{i=1}^n$ if and only if the sequence of code-function distributions $\lbrace p(f_i|f^{i-1})\rbrace_{i=1}^n$ is \textit{good with respect to} $\lbrace p(x_i|x^{i-1},z^{i-1})\rbrace_{i=1}^n$. One choice of such a sequence of code-function distributions is as follows,
\begin{equation}
p(f_i|f^{i-1})=\prod_{z^{i-1}}p(f_i(z^{i-1})|f^{i-1}(z^{i-2}),z^{i-1}).
\label{equa4_6}
\end{equation}
\label{lemma4_2}
\end{lemma}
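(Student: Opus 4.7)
The lemma has two pieces: (i) the \emph{if and only if} clause simply \emph{defines} a sequence of code-function distributions $\{p(f_i|f^{i-1})\}$ to be \emph{good with respect to} $\{p(x_i|x^{i-1},z^{i-1})\}$ precisely when the induced channel input distribution coincides with the given one, so this half is purely terminological and requires no work beyond stating the convention; and (ii) the substantive claim, which is that the explicit factorization (\ref{equa4_6}) yields a good code-function distribution. My plan is to prove (ii) by direct computation of the induced conditional law, using the fact that under the code-function representation the channel input at time $i$ is the deterministic functional $X_i=f_i(Z^{i-1})$.

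First I would fix realizations $x^{i-1}\in\mathcal X^{i-1}$ and $z^{i-1}\in\mathcal Z^{i-1}$, and note that since $X_j=f_j(Z^{j-1})$, the event $\{X^{i-1}=x^{i-1},Z^{i-1}=z^{i-1}\}$ constrains $F^{i-1}$ only by forcing its evaluation along the realized feedback path, namely $f_j(z^{j-1})=x_j$ for $j<i$; it leaves free the values of $f^{i-1}$ at all other feedback histories. By the construction (\ref{equa4_6}), the conditional law of $f_i$ given $f^{i-1}$ factors as an independent product indexed by $z^{i-1}\in\mathcal Z^{i-1}$, so the value $f_i(z^{i-1})$ at the \emph{realized} feedback path is conditionally independent of $f_i$'s values at all other paths, with one-coordinate marginal
$$p\bigl(f_i(z^{i-1})\,\bigm|\,f^{i-1}(z^{i-2}),z^{i-1}\bigr)=p\bigl(x_i\,\bigm|\,x^{i-1},z^{i-1}\bigr),$$
where the right-hand side follows because the constraint above forces $f^{i-1}(z^{i-2})=x^{i-1}$. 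Since this marginal depends on $f^{i-1}$ only through $x^{i-1}$, averaging out the unconstrained degrees of freedom in $f^{i-1}$ and the irrelevant coordinates of $f_i$ leaves
$$p_{ind}(x_i\,|\,x^{i-1},z^{i-1})=p(x_i\,|\,x^{i-1},z^{i-1}),$$
which is exactly the \emph{good} property.

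The main bookkeeping obstacle is that $f_i$ is a full function $\mathcal Z^{i-1}\to\mathcal X$ with $|\mathcal Z|^{i-1}$ coordinates, only one of which (the evaluation at the realized $z^{i-1}$) participates in the actual channel input, while $f^{i-1}$ similarly carries information about inputs along all alternative feedback histories. The product structure $\prod_{z^{i-1}}$ in (\ref{equa4_6}) is exactly what decouples the relevant coordinate from the irrelevant ones and makes the marginalization go through cleanly; without this product independence, the conditioning on $(x^{i-1},z^{i-1})$ could leak information about $f_i(z^{i-1})$ through correlations with other coordinates. A short induction on $i$, with trivial base case $p(f_1)=p(x_1)$ and hence $p_{ind}(x_1)=p(x_1)$, then completes the argument for all $1\le i\le n$.
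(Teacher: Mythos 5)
The paper gives no proof of this lemma at all: it simply refers the reader to Definition~5.1, Lemma~5.1 and Lemma~5.4 of Tatikonda--Mitter~\cite{Tati09} for both the notion of ``good with respect to'' and the proof. So you are filling in material the authors deferred to a citation, and there is nothing in the paper itself to compare against step by step.

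Your treatment of the substantive half---showing that the product construction (\ref{equa4_6}) induces the correct channel input law---is sound in spirit. The key observation that $f_i$ is a full map $\mathcal Z^{i-1}\to\mathcal X$, that only the single coordinate at the realized feedback path $z^{i-1}$ participates in the actual channel input, and that the product over $z^{i-1}$ in (\ref{equa4_6}) makes that coordinate conditionally independent of the irrelevant ones, is exactly the right mechanism; averaging out the unconstrained coordinates and using $f^{i-1}(z^{i-2})=x^{i-1}$ then gives $p_{ind}(x_i|x^{i-1},z^{i-1})=p(x_i|x^{i-1},z^{i-1})$. (One small point you glide over: you also need $F_i$ to be conditionally independent of $(Y^{i-1},Z^{i-1})$ given $F^{i-1}$, so that conditioning on the event $\{X^{i-1}=x^{i-1},Z^{i-1}=z^{i-1}\}$ does not perturb the draw of $f_i$ beyond what is transmitted through $f^{i-1}$; this holds here because $f_i$ influences the system only at time $i$, but it should be stated.) Where I would push back is on your claim that the ``if and only if'' clause is ``purely terminological.'' The paper imports ``good with respect to'' as a pre-existing concept from Tatikonda's Definition~5.1, which gives a structural consistency condition on the code-function law (roughly, that the conditional law of $F_i(z^{i-1})$ given $F^{i-1}(z^{i-2})=x^{i-1}$ equals $p(x_i|x^{i-1},z^{i-1})$), not the stipulation that the induced input distribution matches. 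If the definition were the latter, the ``iff'' would be a vacuous tautology and there would be no reason to state it as a lemma at all. So that half is a genuine equivalence requiring an argument (essentially the posterior-factorization computation you sketch, run in both directions), and treating it as a mere convention is a gap you should close or at least flag explicitly.
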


\indent We refer the readers to Definition $5.1$, Lemma $5.1$ and $5.4$ in \cite{Tati09} for the concept ``\textit{good with respect to}'' and the proof of the above lemma. According to Lemma \ref{lemma4_2}, it is straightforward to obtain the following result which plays an essential role in the channel coding theorem.
\begin{lemma}
For channels with noisy feedback,
\begin{equation*}
\begin{split}
&p(x^n,y^n,f^n)\\
=&\prod_{i=1}^{n} \prod_{z^{i-1}}\underbrace{p(f_i(z^{i-1})|f^{i-1}(z^{i-2}),z^{i-1})}_{\text{Encoding}}\sum_{z^{n}\in \lbrace\mathcal{Z}^{n}:x^n=f^n(z^{n-1})\rbrace}\prod_{i=1}^{n}\underbrace{p(z_i|y^i,z^{i-1})}_{\text{Feedback link}} \underbrace{p(y_i|f^i(z^{i-1}),y^{i-1})}_{\text{Channel}}\\
\end{split}
\end{equation*}
\label{lemma4_6}
\end{lemma}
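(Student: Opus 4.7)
The plan is to start from the joint distribution $p(x^n,y^n,f^n)$, introduce the feedback sequence $z^n$ as an auxiliary variable via marginalization, and then factor the augmented joint $p(x^n,y^n,z^n,f^n)$ according to the natural time ordering of the communication scheme (Definition \ref{def5_01}). The time ordering dictates that the code-functions $f^n$ are drawn first (independently of the channel and feedback realizations), and then at each time step $i$ the channel input $x_i$ is produced deterministically as $f_i(z^{i-1})$, the channel output $y_i$ is generated from $p(y_i|x^i,y^{i-1})$, and the feedback output $z_i$ is generated from $p(z_i|y^i,z^{i-1})$. Applying the chain rule in this order yields
\begin{equation*}
p(x^n,y^n,z^n,f^n) = p(f^n)\prod_{i=1}^{n} p(x_i\mid f^i,z^{i-1})\,p(y_i\mid x^i,y^{i-1})\,p(z_i\mid y^i,z^{i-1}),
\end{equation*}
where $p(x_i\mid f^i,z^{i-1})$ collapses to the indicator $\mathbf{1}\{x_i=f_i(z^{i-1})\}$ because the encoder is deterministic.

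Next I would marginalize out $z^n$. The indicators force the $z^n$-sum to run only over sequences consistent with $x^n=f^n(z^{n-1})$, which is exactly the constraint appearing in the target expression. After the marginalization, the channel term $p(y_i\mid x^i,y^{i-1})$ can be rewritten as $p(y_i\mid f^i(z^{i-1}),y^{i-1})$ inside the restricted sum, producing the feedback/channel product that appears in the statement of the lemma.

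Finally, I would invoke Lemma \ref{lemma4_2}: choosing the code-function distribution to be ``good with respect to'' the given channel input distribution gives, per equation \eqref{equa4_6},
\begin{equation*}
p(f^n)=\prod_{i=1}^{n} p(f_i\mid f^{i-1})=\prod_{i=1}^{n}\prod_{z^{i-1}} p\bigl(f_i(z^{i-1})\mid f^{i-1}(z^{i-2}),z^{i-1}\bigr),
\end{equation*}
which matches the encoding factor in the claim. Combining this with the marginalized expression above yields the stated factorization.

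The main obstacle will be the careful bookkeeping of the constraint $x^n=f^n(z^{n-1})$ inside the sum: in particular, ensuring that the indicators $\mathbf{1}\{x_i=f_i(z^{i-1})\}$ collapse consistently across all $i$ so that the restricted domain $\{z^n\in\mathcal{Z}^n : x^n=f^n(z^{n-1})\}$ is exactly what survives, and that substituting $x^i\mapsto f^i(z^{i-1})$ inside the channel kernel is legitimate on this domain. Everything else is chain-rule bookkeeping plus direct invocation of Lemma \ref{lemma4_2}.
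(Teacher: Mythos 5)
Your proposal is correct and follows essentially the same route as the paper: factor the joint $p(x^n,y^n,z^n,f^n)$ by the chain rule along the time ordering, use the deterministic-encoder / channel / feedback-link Markov properties to collapse each conditional, reduce the $z^n$-sum to the set $\{z^n : x^n = f^n(z^{n-1})\}$, and then expand $p(f^n)$ via Lemma~\ref{lemma4_2}. The only organizational difference is that the paper routes the factorization through Lemma~\ref{lemma4_0} (the version without $f^n$) and then applies its conditional analogue given $f^n$, whereas you build the full joint directly — the Markov chains invoked are identical either way.
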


The proof is shown in the Appendix. This lemma implies that $\underline{I}^R(X(F)\rightarrow Y)$ only depends on channel input distribution $\lbrace p(x_i|x^{i-1},z^{i-1})\rbrace_{i=1}^\infty$.

\subsection{Channel Coding Theorem}
\indent Now we show a general channel coding theorem in terms of the residual directed information.
\begin{theorem}(\textit{Channel Coding Theorem})
For channels with noisy feedback,
\begin{equation}
C_{FB}^{noise}=\sup_{X}\underline{I}^R(X(F)\rightarrow Y)
\label{equ4_10}
\end{equation}
where $\sup_{X}$ means that supremum is taken over all possible channel input distributions $\lbrace p(x_i|x^{i-1},z^{i-1})\rbrace_{i=1}^\infty$.
\label{thm4_2}
\end{theorem}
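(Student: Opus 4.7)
The plan is to reduce the noisy feedback channel to an equivalent feedback-free channel whose input alphabet is the code-function space $\mathcal{F}$ and whose output alphabet is $\mathcal{Y}$. The key observation is that in the code-function representation of Definition \ref{def5_01}, the entire code-function $f^n$ is selected at time $0$ and, thereafter, the channel and the noisy feedback link evolve internally without any information from the receiver affecting the choice of $f^n$. Consequently, from the pair $(\mathcal{F}^n,\mathcal{Y}^n)$ the mapping is a genuine nonfeedback channel, and Theorem \ref{thm4_1} (the Verd\'u--Han general capacity formula) applies directly to yield
\begin{equation*}
C_{FB}^{noise} = \sup_F \underline{I}(F;Y),
\end{equation*}
where the supremum is taken over all input processes on $\mathcal{F}$.

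Once this reduction is established, the remaining work is to re-express the right-hand side as a supremum of $\underline{I}^R(X(F)\rightarrow Y)$ over the original channel input distributions $\{p(x_i|x^{i-1},z^{i-1})\}_{i=1}^\infty$. First, I would invoke Lemma \ref{lemma4_3}, which provides the pointwise equality $i(F^n;Y^n)=i^R(X^n(F^n)\rightarrow Y^n)$ of information densities. Because this equality holds sample-wise on the underlying probability space, the two $p$-$\liminf$ operations coincide and we obtain $\underline{I}(F;Y)=\underline{I}^R(X(F)\rightarrow Y)$ for every joint law consistent with the code-function representation.

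Next, I would use Lemmas \ref{lemma4_2} and \ref{lemma4_6} to translate the supremum from code-function distributions to channel input distributions. Lemma \ref{lemma4_2} shows that for every target $\{p(x_i|x^{i-1},z^{i-1})\}$ there exists at least one code-function distribution (the explicit product form (\ref{equa4_6})) that induces exactly this channel input distribution; conversely, every code-function distribution induces some channel input distribution. Lemma \ref{lemma4_6} then shows that the full joint law $p(x^n,y^n,f^n)$, and hence any functional of its marginals such as $\underline{I}^R(X(F)\rightarrow Y)$, depends on the code-function law $\{p(f_i|f^{i-1})\}$ only through the induced $\{p(x_i|x^{i-1},z^{i-1})\}$. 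Combining these two observations, the supremum over $F$ collapses to a supremum over $X$, which yields the claimed identity (\ref{equ4_10}).

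The main obstacle I anticipate is the bookkeeping required for the transfer of the suprema: one must verify both the \emph{surjectivity} side (every channel input distribution is realizable by some code-function distribution, so $\sup_F \ge \sup_X$) and the \emph{invariance} side ($\underline{I}^R(X(F)\rightarrow Y)$ depends on $F$ only through the induced input process, so $\sup_F \le \sup_X$). The former is handled by the explicit construction (\ref{equa4_6}), while the latter is a direct consequence of the factorization in Lemma \ref{lemma4_6}. A secondary subtlety is that the $p$-$\liminf$ commutes with the pointwise identity of Lemma \ref{lemma4_3}, but because the identity holds pathwise rather than merely in expectation, this passes through without effort. Everything else amounts to citing Theorem \ref{thm4_1} on the enlarged alphabet and chaining the equalities.
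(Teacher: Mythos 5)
Your proposal follows the same high-level strategy as the paper: reduce to a feedback-free channel from $\mathcal{F}$ to $\mathcal{Y}$, use Lemma~\ref{lemma4_3} to replace $i(F^n;Y^n)$ by $i^R(X^n(F^n)\rightarrow Y^n)$, and use Lemmas~\ref{lemma4_2} and~\ref{lemma4_6} to move the supremum from code-function distributions to channel input distributions. The difference is one of organization: the paper does not invoke Theorem~\ref{thm4_1} as a black box and then transfer the supremum in a separate step; instead it lifts the two constituent lemmas of the Verd\'u--Han argument directly into the residual-directed-information setting (Lemma~\ref{lemma4_4}, a lower bound on error probability, and Lemma~\ref{lemma4_5}, a Feinstein-type achievability bound, both obtained from Lemma~\ref{lemma4_3}) and then proves achievability (Theorem~\ref{thm4_4}) and converse (Theorem~\ref{thm4_3}) as two separate theorems. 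This way the supremum transfer is absorbed into those two proofs: achievability starts from an arbitrary channel input distribution and synthesizes a code-function distribution via~(\ref{equa4_6}) (your ``surjectivity'' direction), while the converse bounds the rate of any code by $\sup_X\underline{I}^R(X(F)\rightarrow Y)$.

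One point to tighten in your write-up: you assert that Lemma~\ref{lemma4_6} shows the full joint law $p(x^n,y^n,f^n)$ depends on $\{p(f_i|f^{i-1})\}$ \emph{only through} the induced channel input distribution. The lemma does not say that; it gives a product factorization valid for the specific ``good'' code-function law of the form~(\ref{equa4_6}), and the displayed expression plainly contains the code-function kernel as a standalone factor. What Lemma~\ref{lemma4_6} (together with Lemma~\ref{lemma4_3}) justifies is the weaker but sufficient fact that, once a channel input distribution is fixed and the code-function distribution is the associated good one, $\underline{I}^R(X(F)\rightarrow Y)$ is completely determined by $\{p(x_i|x^{i-1},z^{i-1})\}$ and the channel/feedback kernels, which is what makes $\sup_X$ meaningful. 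The ``invariance'' half of your supremum transfer---that an arbitrary (e.g., code-induced uniform) code-function law cannot do better than the corresponding good one---is the substantive step and needs to be argued along the lines of the converse in Theorem~\ref{thm4_3} rather than read off from the factorization alone.
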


\indent The proof comes along the proof of Theorem \ref{thm4_1} in \cite{Verdu94} and hence is presented in the Appendix. Theorem \ref{thm4_2} indicates that, besides capturing the effective information flow of channels with noisy feedback, the residual directed information is also beneficial for characterizing the capacity. Although formula (\ref{equ4_10}) may not be the only or the simplest characterization of the noisy feedback capacity, it provides benefits in
many aspects. We herein present two of them as follows.
\begin{enumerate}
\item. Measurements of Information Flows: Let $p^*$ be the optimal solution of formula (\ref{equ4_10}). Then we obtain that, when the channel is used at capacity, the total transmission rate in the forward channel is in fact $\underline{I}(X\rightarrow Y)|_{p^*}$\footnote{$\underline{I}(X\rightarrow Y)|_{p^*}$ denotes that the value is evaluated at channel input distributions $p^*$.} instead of $C_{FB}^{noise}$ and the difference between them (i.e.redundant transmission rate) is $\underline{I}(X\rightarrow Y|F)|_{p^*}$. These numerical knowledge might be crucial in system design and evaluation.
\item. Induced Computable Bounds: Let $q^*= arg\sup_{X}\underline{I}(X\rightarrow Y)$ where supremum is taken over all possible channel input distributions $\lbrace p(x_i|x^{i-1},z^{i-1})\rbrace_{i=1}^\infty$. Since code-function $F$ is not involved at this point, the computation complexity is significantly reduced. Based on Theorem \ref{thm4_2}, it is straightforward to obtain $\underline{I}(X\rightarrow Y)|_{q^*}$ and $\underline{I}^R(X(F)\rightarrow Y)|_{q^*}$ as upper\footnote{Note that $\underline{I}(X\rightarrow Y)|_{q^*}=\sup_{\lbrace p(x_i|x^{i-1},z^{i-1})\rbrace_{i=1}^\infty}\underline{I}(X\rightarrow Y)\leq C_{FB}=\sup_{\lbrace p(x_i|x^{i-1},y^{i-1})\rbrace_{i=1}^\infty}\underline{I}(X\rightarrow Y)$ where $C_{FB}$ is the corresponding perfect feedback capacity. Therefore this upper bound is in general better than $C_{FB}$.} and lower bounds on the capacity, respectively. Further, the gap between the bounds is $\underline{I}(X\rightarrow Y|F)|_{q^*}$, which is definitely a tightness evaluation of the bounds.
\end{enumerate}

\subsection{Computable Bounds on the Capacity}
\indent As it is shown, the capacity characterization in Theorem \ref{thm4_2} is not computable in general due to the probabilistic limit and code-functions. This motivates us to explore some conditions under which the previous characterization can be simplified or to look at some computable bounds instead. Toward this end, we first introduce a strong converse theorem under which the ``probabilistic limit'' can be replaced by the ``normal limit''. We then turn to characterize a pair of upper and lower bounds which is much easier to compute and tight in certain practical situations.
\begin{definition}(\textit{Strong Converse})
A channel with noisy feedback capacity $C_{FB}^{noise}$ has a strong converse if for any $R>C_{FB}^{noise}$, every sequence of channel codes $\lbrace(n,M_n,\epsilon_n)\rbrace_{n=1}^{\infty}$ with
\begin{equation*}
\liminf_{n\rightarrow \infty}\frac{1}{n}\log M_n\geq R
\end{equation*}
satisfies $\lim_{n\rightarrow \infty}\epsilon_n =1$
\end{definition}

\begin{theorem}(\textit{Strong Converse Theorem})
A channel with noisy feedback capacity $C_{FB}^{noise}$ satisfies the strong converse property if and only if
\begin{equation}
\sup_{X}\underline{I}^R(X(F)\rightarrow Y)=\sup_{X}\overline{I}^R(X(F)\rightarrow Y)\footnote{This condition can be alternatively expressed as $\sup_{X}\underline{I}(F;Y)=\sup_{X}\overline{I}(F;Y)$. Since the computation complexity difference between the mutual information and residual directed information is not justified, either condition is a candidate for check. Note that how to check the strong converse is out of the scope of this paper.}
\label{equ4_3}
\end{equation}
Furthermore, if the strong converse property holds, we have
\begin{equation*}
C_{FB}^{noise}=\sup_{X}\lim_{n\rightarrow \infty}\frac{1}{n}I^R(X^n(F^n)\rightarrow Y^n).
\label{thm4_5}
\end{equation*}
\end{theorem}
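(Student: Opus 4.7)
The plan is to reduce the statement to the Verd\'u--Han information-spectrum strong converse for arbitrary nonfeedback channels, treating the noisy-feedback system as the memoryless map from code-functions $F$ to outputs $Y$, exactly as was done in the proof of Theorem \ref{thm4_2}. Once in that reduced form, two ingredients already in hand do the rest of the work: Lemma \ref{lemma4_3} translates pointwise between $i(F^n;Y^n)$ and $i^R(X^n(F^n)\rightarrow Y^n)$, and Lemma \ref{lemma4_2} (with Lemma \ref{lemma4_6}) shows that the supremum over code-function distributions $\{p(f^n)\}$ realizes exactly the supremum over original channel input distributions $\{p(x_i\mid x^{i-1},z^{i-1})\}$.

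Concretely, I would first invoke the Verd\'u--Han general strong converse theorem \cite{Verdu94,bookhan03}: for an arbitrary channel with input $F$ and output $Y$, the $\epsilon$-capacity is independent of $\epsilon\in(0,1)$ (i.e., the strong converse holds) if and only if $\sup_F \underline I(F;Y)=\sup_F \overline I(F;Y)$. Applying this to the induced channel $\mathcal F^n\to \mathcal Y^n$ of Fig.~\ref{figure4} reduces the problem to rewriting both sides in residual directed information. By Lemma \ref{lemma4_3}, the pointwise identity $i(F^n;Y^n)=i^R(X^n(F^n)\rightarrow Y^n)$ immediately gives $\underline I(F;Y)=\underline I^R(X(F)\rightarrow Y)$ and $\overline I(F;Y)=\overline I^R(X(F)\rightarrow Y)$. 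Lemma \ref{lemma4_2} then lets me replace $\sup_F$ by $\sup_X$ on both sides without changing the value, which produces the stated ``iff.''

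For the furthermore part, note that under the strong converse condition, any sequence of input distributions attaining (or approaching) $\sup_X\underline I^R=\sup_X\overline I^R$ must force the $p$-$\liminf$ and $p$-$\limsup$ of $\tfrac{1}{n}i^R(X^n(F^n)\rightarrow Y^n)$ to coincide, so this normalized information density converges in probability. Because all alphabets are finite, $\tfrac{1}{n}i^R$ is uniformly bounded, hence dominated convergence upgrades convergence in probability to $L^1$-convergence; consequently $\lim_{n\to\infty}\tfrac{1}{n}I^R(X^n(F^n)\rightarrow Y^n)$ exists and equals the common probabilistic limit at any such distribution. Taking the supremum over $X$ and combining with Theorem \ref{thm4_2} yields the displayed formula.

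The main obstacle I anticipate is the bookkeeping behind the identification ``$\sup_F=\sup_X$'' for the $\overline I^R$ side: one must verify that a choice of code-function distribution realizing a prescribed channel-input distribution (as in \eqref{equa4_6}) preserves the joint law of $(X^n,Y^n,F^n)$ given by Lemma \ref{lemma4_6}, so that the $p$-$\limsup$ computed under $p(f^n)$ matches the one computed under the induced $p(x_i\mid x^{i-1},z^{i-1})$. A secondary subtlety in the furthermore equation is that it involves a sup of ordinary limits, not of $\liminf$'s; restricting the sup to input distributions for which $\lim_n \tfrac{1}{n}I^R$ actually exists must not decrease the value, and this is exactly what the uniform boundedness plus convergence-in-probability argument above supplies through a near-optimizer sequence.
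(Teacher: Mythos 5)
Your treatment of the ``if and only if'' clause is exactly the paper's route: view the system as a nonfeedback channel $\mathcal{F}^n\to\mathcal{Y}^n$, invoke the Verd\'u--Han / information-spectrum strong converse criterion, use the pointwise identity of Lemma~\ref{lemma4_3} to rewrite the spectral quantities in terms of $i^R$, and use Lemmas~\ref{lemma4_2}--\ref{lemma4_6} to replace $\sup_F$ by $\sup_X$. The paper's proof of the whole theorem is in fact a one-line appeal to Chapter~3.5 of \cite{bookhan03} after this substitution, so for this part you are on the same track.

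Your argument for the ``furthermore'' clause, however, has a genuine gap. The strong converse condition is $\sup_X\underline I^R=\sup_X\overline I^R$, a statement about suprema; it does \emph{not} imply $\underline I^R(X^*)=\overline I^R(X^*)$ for any particular (near-)optimizer $X^*$. If $\underline I^R(X^*)\ge C-\epsilon$, all you can conclude is $\overline I^R(X^*)\le C$, so the p-$\liminf$ and p-$\limsup$ differ by at most $\epsilon$, and they need not coincide for any fixed $X^*$. Hence the normalized density need not converge in probability to a constant, and the dominated-convergence step has nothing to act on. The boundedness claim is also shaky: the density being controlled is $i(F^n;Y^n)=i^R(X^n(F^n)\to Y^n)$, and the code-function alphabet $\mathcal F_n$ grows doubly exponentially in $n$, so there is no uniform bound on $\tfrac1n i^R$ coming from ``finite alphabets.'' What the information-spectrum literature actually uses is uniform integrability (available for a finite output alphabet $\mathcal Y$), which yields the two-sided sandwich $\underline I^R\le\liminf_n\tfrac1nI^R\le\limsup_n\tfrac1nI^R\le\overline I^R$, not $L^1$-convergence at a near-optimizer. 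Even with the sandwich, one must still produce input processes for which $\lim_n\tfrac1n I^R$ \emph{exists} and approaches $C$, which typically requires a subsequence/patching construction; your sketch asserts this but does not deliver it. All of this is delegated by the paper to \cite{bookhan03}, so the correct move here is either to cite that material explicitly or to reconstruct the uniform-integrability argument and the existence-of-limit step carefully rather than via dominated convergence at a near-optimizer.
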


\indent The proof directly follows from chapter $3.5$ in \cite{bookhan03} by appropriate replacement of $i^R(X^n(F^n)\rightarrow Y^n)$ on $i(F^n; Y^n)$. This theorem gives us an important message that, for channels satisfying the strong converse property, we may compute the noisy feedback capacity by taking the normal limit instead of the probabilistic limit. How to further simplify the capacity characterization will be explored in the future.\\
\indent We next propose computable upper bounds on the noisy feedback capacity.
\begin{theorem}(\textit{Upper Bound})\footnote{As we will see from the proof, this upper bound holds for any finite-alphabet channel with or without strong converse property. }
\begin{equation}
\bar{C}_{FB}^{noise}=\sup_{X}\liminf_{n\rightarrow \infty}\frac{1}{n}I(X^n\rightarrow Y^n||Z^n)
\label{equa4_8}
\end{equation}
where $\bar{C}_{FB}^{noise}$ denotes the upper bound of the capacity and the supremum is taken over all possible channel input distribution $\lbrace p(x_i|x^{i-1},z^{i-1})\rbrace_{i=1}^\infty$.
\label{thm_coding_aditive}
\end{theorem}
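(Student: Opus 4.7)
The plan is to prove this as a weak converse in the style of Theorem~\ref{Thm_4_1}: apply Fano's inequality to any sequence of $(2^{nR},n)$ codes with $P_e^{(n)}\to 0$ to obtain
\begin{equation*}
nR \leq 1 + P_e^{(n)} nR + I(W;Y^n),
\end{equation*}
and then upper bound $I(W;Y^n)$ by $I(X^n\rightarrow Y^n\|Z^n)$. Dividing by $n$, taking $\liminf$, and supremizing over the induced channel input distribution then delivers $C_{FB}^{noise}\leq \bar{C}_{FB}^{noise}$.

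The heart of the argument is the bound $I(W;Y^n)\leq I(X^n\rightarrow Y^n\|Z^n)$. First, enlarging the observation gives $I(W;Y^n)\leq I(W;Y^n,Z^{n-1})$. Expanding the right-hand side along the natural time ordering $Y_1,Z_1,Y_2,Z_2,\ldots,Y_n$ yields
\begin{equation*}
I(W;Y^n,Z^{n-1}) = \sum_{i=1}^n I(W;Y_i\mid Y^{i-1},Z^{i-1}) + \sum_{i=1}^{n-1} I(W;Z_i\mid Y^i,Z^{i-1}).
\end{equation*}
The second sum vanishes term by term because the feedback-link kernel $p(z_i\mid y^i,z^{i-1})$ does not involve $W$, so $W$ and $Z_i$ are conditionally independent given $(Y^i,Z^{i-1})$. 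For the first sum, I would show
\begin{equation*}
I(W;Y_i\mid Y^{i-1},Z^{i-1}) = I(X^i;Y_i\mid Y^{i-1},Z^{i-1})
\end{equation*}
by expanding $I(W,X^i;Y_i\mid Y^{i-1},Z^{i-1})$ in two ways: since $X^i$ is a deterministic function of $(W,Z^{i-1})$ at the encoder, it contributes nothing once $W$ is in the conditioning; and since the channel kernel $p(y_i\mid x^i,y^{i-1})$ gives the Markov chain $W-(X^i,Y^{i-1})-Y_i$, which survives the extra conditioning on $Z^{i-1}$ (because $Z^{i-1}$ is built only from past outputs), $W$ contributes nothing once $X^i$ is in the conditioning. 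Summing over $i$ then produces $\sum_{i=1}^n I(X^i;Y_i\mid Y^{i-1},Z^{i-1}) = I(X^n\rightarrow Y^n\|Z^n)$ by Definition~\ref{def6}.

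Combining the Fano bound with this inequality gives $R(1-P_e^{(n)}) - \tfrac{1}{n} \leq \tfrac{1}{n} I(X^n\rightarrow Y^n\|Z^n)$ for every $n$. Taking $\liminf_{n\to\infty}$ of both sides and using $P_e^{(n)}\to 0$ yields $R \leq \liminf_n \tfrac{1}{n} I(X^n\rightarrow Y^n\|Z^n)$. Lemma~\ref{lemma4_6} ensures that this functional depends only on the induced channel input distribution $\{p(x_i\mid x^{i-1},z^{i-1})\}_{i=1}^\infty$, so supremizing over that family gives $R\leq \bar{C}_{FB}^{noise}$. Since $R$ was an arbitrary achievable rate, $C_{FB}^{noise}\leq \bar{C}_{FB}^{noise}$.

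The main obstacle I anticipate is the equality $I(W;Y_i\mid Y^{i-1},Z^{i-1})=I(X^i;Y_i\mid Y^{i-1},Z^{i-1})$: one has to check carefully that both zero mutual-information terms needed for the two-way chain-rule expansion genuinely hold after adjoining $Z^{i-1}$ to the conditioning, i.e., that the Markov chain $W-(X^i,Y^{i-1})-Y_i$ is preserved under this extra conditioning and that $X^i$ remains a function of $(W,Z^{i-1})$ given the available past. Once these two conditional independencies are established cleanly, the rest of the proof is standard bookkeeping with chain rules and Fano.
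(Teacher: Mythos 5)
Your proposal is correct, but it takes a genuinely different route from the paper. The paper's proof is short precisely because it leans on machinery already in place: it starts from the information-spectrum coding theorem $C_{FB}^{noise}=\sup_X\underline{I}^R(X(F)\rightarrow Y)$, invokes the inequality $\underline{I}(F;Y)\leq\liminf_n\tfrac{1}{n}I(F^n;Y^n)$ from Han, then applies Lemma~\ref{lemma4_7} to decompose $I(F^n;Y^n)=I(X^n\rightarrow Y^n||Z^n)-I(F^n;Z^n|Y^n)$ and drops the nonnegative correction term. Your argument instead proves a weak converse from first principles with Fano, data processing, and a chain-rule expansion of $I(W;Y^n,Z^{n-1})$ along the ordering $Y_1,Z_1,\ldots,Y_n$. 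The two conditional-independence facts you flag as the ``main obstacle'' — $I(W;Z_i|Y^i,Z^{i-1})=0$ from the feedback kernel, and $I(W;Y_i|X^i,Y^{i-1},Z^{i-1})=0$ from the channel kernel together with $X^i$ being a deterministic function of $(W,Z^{i-1})$ — both hold straightforwardly from the factorization of the joint (and are used in the paper's own Lemma~\ref{lem_dmc}). In effect your identity $I(W;Y^n,Z^{n-1})=\sum_i I(X^i;Y_i|Y^{i-1},Z^{i-1})=I(X^n\rightarrow Y^n||Z^n)$ is the content of Lemma~\ref{lemma4_7} with $W$ in place of $F^n$; you reprove it directly rather than reusing the code-function version. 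What your route buys is self-containment — it does not require Theorem~\ref{thm4_2} or the code-function/information-spectrum apparatus, and since you only need a weak converse to establish an \emph{upper} bound, the footnote's ``with or without strong converse'' caveat is automatically handled. What the paper's route buys is brevity and consistency with the surrounding development. One small pointer: to justify that $I(X^n\rightarrow Y^n||Z^n)$ depends only on $\{p(x_i|x^{i-1},z^{i-1})\}$ you should cite Lemma~\ref{lemma4_0} (the factorization of $p(x^n,y^n,z^n)$) rather than Lemma~\ref{lemma4_6}, which is the code-function version.
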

\begin{remark}
The computation complexity of formula (\ref{equa4_8}), which is independent of code-functions, is significantly reduced and is similar to that of directed information. We herein conjecture that most of the algorithms for computing the directed information may apply to compute formula (\ref{equa4_8}). For example, for finite-state machine channels \cite{Yang05} with noisy feedback, formula (\ref{equa4_8}) may be computable by using dynamic programming approach along the lines of \cite{Yang05}.
\end{remark}

\indent We need the following lemma before showing the proof of Theorem \ref{thm_coding_aditive}.
\begin{lemma}
\begin{equation*}
I(F^n;Y^n)=I^R(X^n(F^n)\rightarrow Y^n)=I(X^n\rightarrow Y^n||Z^n)-I(F^n;Z^n|Y^n)
\end{equation*}
\label{lemma4_7}
\end{lemma}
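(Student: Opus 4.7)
The first equality $I(F^n;Y^n)=I^R(X^n(F^n)\rightarrow Y^n)$ is just the expectation of the pointwise identity established in Lemma \ref{lemma4_3}, so there is nothing new to do there. My plan is to devote the work to the second equality, which I would recast as
\begin{equation*}
I(F^n;Y^n)+I(F^n;Z^n|Y^n)\;=\;I(X^n\rightarrow Y^n\|Z^n),
\end{equation*}
i.e., by the chain rule for mutual information,
\begin{equation*}
I(F^n;Y^n,Z^n)\;=\;I(X^n\rightarrow Y^n\|Z^n).
\end{equation*}
This is the cleaner target, and its advantage is that the left side is an ordinary mutual information that can be expanded over the time ordering $(Y_1,Z_1,\dots,Y_n,Z_n)$.

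The next step is to write
\begin{equation*}
I(F^n;Y^n,Z^n)=\sum_{i=1}^{n}\bigl[I(F^n;Y_i\mid Y^{i-1},Z^{i-1})+I(F^n;Z_i\mid Y^i,Z^{i-1})\bigr],
\end{equation*}
and kill the feedback-link terms. Because the feedback link is modelled as $p(z_i\mid y^i,z^{i-1})$, the joint factorization of $(F^n,X^n,Y^n,Z^n)$ given in Lemma \ref{lemma4_6} implies the Markov chain $Z_i-(Y^i,Z^{i-1})-F^n$, so each summand $I(F^n;Z_i\mid Y^i,Z^{i-1})$ vanishes. This reduces the problem to showing that, for every $i$,
\begin{equation*}
I(F^n;Y_i\mid Y^{i-1},Z^{i-1})=I(X^i;Y_i\mid Y^{i-1},Z^{i-1}),
\end{equation*}
whose sum over $i$ is precisely $I(X^n\rightarrow Y^n\|Z^n)$ by Definition \ref{def6}.

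The per-term identity is the real substance, and I expect it to be the main obstacle. I would prove it by double inclusion of the conditioning sets. First, since $X_j=f_j(Z^{j-1})$ for $j\le i$, the variable $X^i$ is a deterministic function of $(F^i,Z^{i-1})$, and therefore $(Y^{i-1},Z^{i-1},F^n)$ already determines $X^i$; this gives $H(Y_i\mid Y^{i-1},Z^{i-1},F^n)=H(Y_i\mid Y^{i-1},Z^{i-1},F^n,X^i)$. Second, from the channel factor $p(y_i\mid x^i,y^{i-1})$ in Lemma \ref{lemma4_6} I can read off the Markov chain $Y_i-(X^i,Y^{i-1})-(Z^{i-1},F^n)$, which collapses the right-hand side to $H(Y_i\mid Y^{i-1},X^i)=H(Y_i\mid Y^{i-1},Z^{i-1},X^i)$ (the last step uses the same Markov chain applied to the single condition $F^n$ removed). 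Matching $H(Y_i\mid Y^{i-1},Z^{i-1})$ on both sides yields the desired equality.

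Putting the three steps together then gives $I(F^n;Y^n,Z^n)=I(X^n\rightarrow Y^n\|Z^n)$, and the chain-rule rearrangement finishes the lemma. The only subtlety I foresee is making precise that the Markov chains above genuinely follow from the joint distribution of Lemma \ref{lemma4_6} — in particular that, although $F^n$ encodes future code-function values, the channel output $Y_i$ still depends on $F^n$ only through $X^i$; once one writes out $p(y_i\mid x^i,y^{i-1},z^{i-1},f^n)=p(y_i\mid x^i,y^{i-1})$ from the factorization, everything else is bookkeeping.
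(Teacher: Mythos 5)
Your proof is correct and takes essentially the same route as the paper's: both reduce to showing $I(F^n;Y^n,Z^n)=I(X^n\rightarrow Y^n\|Z^n)$, expand over the time ordering of $(Y_i,Z_i)$, kill the $Z_i$ terms by the feedback-link Markov chain $Z_i-(Y^i,Z^{i-1})-F$, and convert the remaining $Y_i$ terms to $I(X^i;Y_i\mid Y^{i-1},Z^{i-1})$ using determinism of $X^i$ from $(F,Z^{i-1})$ and the channel Markov chain. The only cosmetic difference is that the paper first invokes the no-feedback identity $I(F^n;(Y^n,Z^n))=I(F^n\rightarrow(Y^n,Z^n))$ so that its chain-rule conditioning carries $F^i$ rather than $F^n$; you skip that step and carry $F^n$ throughout, which is slightly more elementary but relies on the same Markov structure.
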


\begin{proof}
\begin{equation*}
\begin{split}
&I^R(X^n(F^n)\rightarrow Y^n)\\
\stackrel{(a)}{=}&I(F^n;Y^n)\\
=&I(F^n;(Y^n,Z^n))-I(F^n;Z^n|Y^n) \\
\stackrel{(b)}{=}&I(F^n \rightarrow (Y^n,Z^n))-I(F^n;Z^n|Y^n) \\
=&\sum_{i=1}^n I(F^i,(Y_i,Z_i)|Y^{i-1},Z^{i-1})-I(F^n;Z^n|Y^n)\\
=&\sum_{i=1}^n H(Y_i,Z_i|Y^{i-1},Z^{i-1})-H(Y_i,Z_i|Y^{i-1},Z^{i-1},F^i)-I(F^n;Z^n|Y^n)\\
=&\sum_{i=1}^n H(Z_i|Y^{i},Z^{i-1})+H(Y_i|Y^{i-1},Z^{i-1})-H(Z_i|Y^{i},Z^{i-1},F^i)-H(Y_i|Y^{i-1},Z^{i-1},F^{i})-I(F^n;Z^n|Y^n)  \\
\stackrel{(c)}{=}&\sum_{i=1}^n H(Y_i|Y^{i-1},Z^{i-1})-H(Y_i|Y^{i-1},Z^{i-1},F^{i})-I(F^n;Z^n|Y^n)\\
\stackrel{(d)}{=}&\sum_{i=1}^n H(Y_i|Y^{i-1},Z^{i-1})-H(Y_i|Y^{i-1},X^{i},Z^{i-1},F^{i})-I(F^n;Z^n|Y^n)\\
\stackrel{(e)}{=}&\sum_{i=1}^n H(Y_i|Y^{i-1},Z^{i-1})-H(Y_i|Y^{i-1},X^{i},Z^{i-1})-I(F^n;Z^n|Y^n)\\
=&\sum_{i=1}^n I(X^i,Y_i|Y^{i-1},Z^{i-1})-I(F^n;Z^n|Y^n)\\
=&I(X^n\rightarrow Y^n||Z^{n})-I(F^n;Z^n|Y^n)\\
\end{split}
\end{equation*}
where (a) follows from Lemma \ref{lemma4_3}. Line (b) follows from the fact that there exists no feedback from $(Y^n,Z^n)$ to $F^n$ and thus the mutual information and directed information coincide. Line (c) follows from the fact that $H(Z_i|Y^{i},Z^{i-1})=H(Z_i|Y^{i},Z^{i-1},F^i)$ since $F^i-(Y^{i},Z^{i-1})-Z_i$ forms a Markov chain. Line (d) follows from the fact that $X^i$ can be determined by $F^i$ and the outputs of the feedback link $Z^{i-1}$. Line (e) follows from the Markov chain $F^{i}-(Y^{i-1},X^{i},Z^{i-1})-Y_i$.
\end{proof}

\indent Now we present the proof of Theorem \ref{thm_coding_aditive} as follows.
\begin{proof}
\indent Recall Lemma A1 in \cite{Han93}, we have $\underline{I}(F;Y)\leq \liminf_{n\rightarrow \infty}\frac{1}{n}I(F^n;Y^n)$ for any sequence of joint probability. That is, $\underline{I}^R(X(F)\rightarrow Y)\leq \liminf_{n\rightarrow \infty}\frac{1}{n}I^R(X^n(F^n)\rightarrow Y^n)$. Then by Lemma \ref{lemma4_7},
\begin{equation}
\begin{split}
C_{FB}^{noise}\leq &\sup_{X}\liminf_{n\rightarrow \infty}\frac{1}{n}I^R(X^n(F^n)\rightarrow Y^n)\\
=&\sup_{X}\liminf_{n\rightarrow \infty}\frac{1}{n}(I(X^n\rightarrow Y^n||Z^n)-I(F^n;Z^n|Y^n)) \\
\leq &\sup_{X}\liminf_{n\rightarrow \infty}\frac{1}{n}I(X^n\rightarrow Y^n||Z^n)\\
\end{split}
\label{equa4_7}
\end{equation}
\end{proof}

\begin{corollary}
Assume that there is an independent additive noise feedback (Fig.\ref{figure5}), then
\begin{equation*}
\bar{C}_{FB}^{noise}=\sup_{X}\liminf_{n\rightarrow \infty}\frac{1}{n}I(X^n\rightarrow Y^n|V^n)
\end{equation*}
where $\sup_{X}$ means that supremum is taken over all possible channel input distribution $\lbrace p(x_i|x^{i-1},y^{i-1}+v^{i-1})\rbrace_{i=1}^\infty$.
\label{coro_conditional_directed_info}
\end{corollary}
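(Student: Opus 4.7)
The plan is to deduce the corollary directly from Theorem~\ref{thm_coding_aditive} by two substitutions. First, I will rewrite the causal conditioning on $Z^n$ as causal conditioning on $V^n$, exploiting the bijective nature of $Z_i=Y_i+V_i$. Second, I will convert ``causal conditioning'' on $V^n$ into ``normal conditioning'' by invoking Remark~\ref{remark_pre_01}, after checking the relevant Markov chains.

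To implement step one, I expand
\begin{equation*}
I(X^n\rightarrow Y^n||Z^n) \;=\; \sum_{i=1}^{n} I(X^i;Y_i\mid Y^{i-1},Z^{i-1}).
\end{equation*}
Because $Z_j=Y_j+V_j$ and the map $v\mapsto y+v$ is invertible on the finite alphabet once $y$ is known, the pair $(Y^{i-1},Z^{i-1})$ generates the same $\sigma$-algebra as $(Y^{i-1},V^{i-1})$. Hence the conditional mutual informations are unchanged if we condition on $(Y^{i-1},V^{i-1})$ instead of $(Y^{i-1},Z^{i-1})$, which immediately yields $I(X^n\rightarrow Y^n||Z^n)=I(X^n\rightarrow Y^n||V^n)$.

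To implement step two, I verify the Markov chains required by Remark~\ref{remark_pre_01} with $V$ in place of $Z$, namely $V_i^n-Y^{i-1}-Y_i$ and $V_i^n-(X^i,Y^{i-1})-Y_i$. By the ``independent additive noise feedback'' assumption, the noise samples $V_i,V_{i+1},\dots,V_n$ are independent of the causal history $(W,X^i,Y^i,V^{i-1})$ up through time $i$. Since $Y_i$ is generated by the forward channel from $(X^i,Y^{i-1})$, future noises carry no additional information about $Y_i$ once $Y^{i-1}$ (or $(X^i,Y^{i-1})$) is given, so both Markov chains hold. Remark~\ref{remark_pre_01} then gives
\begin{equation*}
I(X^n\rightarrow Y^n||V^n) \;=\; I(X^n\rightarrow Y^n\mid V^n).
\end{equation*}

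Combining the two equalities and taking $\sup_{X}\liminf_{n\to\infty}\tfrac{1}{n}$ of both sides, Theorem~\ref{thm_coding_aditive} gives $\bar C_{FB}^{noise}=\sup_{X}\liminf_{n\to\infty}\tfrac{1}{n}I(X^n\rightarrow Y^n\mid V^n)$, which is the claim. There is no genuine obstacle here: the corollary is a clean specialization of the general upper bound, and the only non-mechanical ingredient is the Markov chain verification, where the hypothesis ``independent'' in independent additive noise feedback is used. I would expect this verification to be a short paragraph rather than the main technical step.
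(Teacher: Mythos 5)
Your proof is correct and follows essentially the same route as the paper: expand $I(X^n\rightarrow Y^n||Z^n)$ as a sum of conditional mutual informations, replace conditioning on $(Y^{i-1},Z^{i-1})$ by $(Y^{i-1},V^{i-1})$ using the invertibility of $z\mapsto z-y$, recognize the result as $I(X^n\rightarrow Y^n||V^n)$, and invoke Remark~\ref{remark_pre_01} to pass to ordinary conditioning. The one place you go beyond the paper is in actually verifying the Markov chains that Remark~\ref{remark_pre_01} requires, which the paper leaves implicit; your independence argument (future $V$'s are independent of the causal history through time $i$) is the right one.
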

\begin{proof}
\begin{equation*}
\begin{split}
I(X^n\rightarrow Y^n||Z^n)
=&\sum_{i=1}^n I(X^i,Y_i|Y^{i-1},Z^{i-1})\\
=&\sum_{i=1}^n I(X^i,Y_i|Y^{i-1},V^{i-1})\\
=&I(X^n\rightarrow Y^n||V^n)\\
\stackrel{(a)}{=}&I(X^n\rightarrow Y^n|V^n)\\
\end{split}
\end{equation*}
where (a) follows from remark \ref{remark_pre_01}. The proof is complete.
\end{proof}

\indent Next, we show a lower bound on the capacity for strong converse channels with additive noise feedback. In fact, any particular coding scheme may induce a low bound on the noisy feedback capacity. However, the lower bound proposed in the following has nice features and its own advantages.
\begin{theorem}(\textit{Lower Bound})
Assume that a channel with an independent additive noise feedback (Fig.\ref{figure5}) satisfies the strong converse property. A lower bound on the noisy feedback capacity is given by
\begin{equation*}
\underline{C}_{FB}^{noise}=\bar{C}_{FB}^{noise}-\bar{h}(V)
\end{equation*}
where
\begin{equation*}
\bar{h}(V)=\limsup_{n\rightarrow \infty}\frac{1}{n}H(V^n).
\end{equation*}
\label{coro_conditional_directed_info}
\end{theorem}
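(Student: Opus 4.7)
The plan is to exploit the strong converse hypothesis to express $C_{FB}^{noise}$ as an ordinary limit, apply Lemma \ref{lemma4_7} to split the residual directed information into the quantity appearing in the upper bound and a correction term, and then control the correction term by the entropy of the feedback noise alone. By the Strong Converse Theorem, under the theorem's assumption,
\begin{equation*}
C_{FB}^{noise}=\sup_{X}\lim_{n\rightarrow\infty}\frac{1}{n}I^R(X^n(F^n)\rightarrow Y^n),
\end{equation*}
so it suffices to lower bound $\frac{1}{n}I^R(X^n(F^n)\rightarrow Y^n)$ by quantities whose suprema over $X$ reproduce $\bar{C}_{FB}^{noise}-\bar{h}(V)$.

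First I would invoke Lemma \ref{lemma4_7} to write
\begin{equation*}
I^R(X^n(F^n)\rightarrow Y^n)=I(X^n\rightarrow Y^n||Z^n)-I(F^n;Z^n|Y^n).
\end{equation*}
For additive noise feedback $Z_i=Y_i+V_i$, the Corollary preceding the theorem already established $I(X^n\rightarrow Y^n||Z^n)=I(X^n\rightarrow Y^n|V^n)$. For the correction term, the relation $Z^n=Y^n+V^n$ renders $Z^n$ a deterministic function of $V^n$ given $Y^n$ (and vice versa), so $H(Z^n|Y^n)=H(V^n|Y^n)$, which yields
\begin{equation*}
I(F^n;Z^n|Y^n)\leq H(Z^n|Y^n)=H(V^n|Y^n)\leq H(V^n).
\end{equation*}
The crucial point is that coarsening up to $H(V^n)$ produces a quantity independent of the channel input distribution, which is what will allow it to be pulled outside the supremum over $X$.

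Combining these two facts gives, for every $n$ and every admissible $X$,
\begin{equation*}
\frac{1}{n}I^R(X^n(F^n)\rightarrow Y^n)\geq \frac{1}{n}I(X^n\rightarrow Y^n|V^n)-\frac{1}{n}H(V^n).
\end{equation*}
Taking limits and invoking the elementary inequality $\liminf_n(a_n-b_n)\geq \liminf_n a_n-\limsup_n b_n$, together with the fact that the strong converse makes the limit on the left exist, I obtain
\begin{equation*}
\lim_{n\rightarrow\infty}\frac{1}{n}I^R(X^n(F^n)\rightarrow Y^n)\geq \liminf_{n\rightarrow\infty}\frac{1}{n}I(X^n\rightarrow Y^n|V^n)-\bar{h}(V).
\end{equation*}
Since $\bar{h}(V)$ does not depend on $X$, taking $\sup_X$ on both sides produces $C_{FB}^{noise}\geq \bar{C}_{FB}^{noise}-\bar{h}(V)$, which is the stated lower bound.

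The main obstacle I anticipate is not the algebra but the careful bookkeeping of the different limit operations: the decomposition in Lemma \ref{lemma4_7} is an exact equality only at finite $n$, and after coarsening $H(V^n|Y^n)\leq H(V^n)$ the inequality ceases to be tight, so the strong converse is precisely what saves us from having to deal with the subtler probabilistic liminf/limsup gymnastics and instead lets us manipulate ordinary limits. The additive structure of the feedback and the fact that the noise-entropy rate $\bar{h}(V)$ is independent of the forward channel input optimization are what make the clean separation between the upper-bound term $\bar{C}_{FB}^{noise}$ and the noise penalty $\bar{h}(V)$ possible.
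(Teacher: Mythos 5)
Your proposal is correct and follows essentially the same route as the paper: invoke the strong converse to pass to ordinary limits, apply Lemma \ref{lemma4_7} to peel off $I(F^n;Z^n|Y^n)$, bound that term by $H(V^n)$, and take suprema after splitting the limit. Your estimate $I(F^n;Z^n|Y^n)\leq H(Z^n|Y^n)=H(V^n|Y^n)\leq H(V^n)$ is a small, clean shortcut compared to the paper's term-by-term chain through $\sum_i H(Z_i|Z^{i-1},Y^i)$ and the Markov chain $(Z^{i-1},Y^i)-V^{i-1}-V_i$, and your direct $\liminf(a_n-b_n)\geq\liminf a_n-\limsup b_n$ step replaces the paper's explicit $\delta$-$\xi$ bookkeeping around a near-optimal $X_\xi$, but the substance is identical.
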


\begin{proof}
\indent We need to show that, for any $\delta>0$, there exists a sequence of $(n,M,\epsilon_n)$ channel codes ($\epsilon_n\rightarrow 0$ as $n\rightarrow \infty$) with transmission rate
\begin{equation*}
\begin{split}
R=&\bar{C}_{FB}^{noise}-\bar{h}(V)-\delta\\
=&\sup_{X}\liminf_{n\rightarrow \infty}\frac{1}{n}I(X^n\rightarrow Y^n|V^n)-\bar{h}(V)-\delta.\\
\end{split}
\end{equation*}

\indent Now, for any fixed $\delta>0$, we take $\xi$ satisfying $0<\xi<\delta$ and let $X_\xi$ be a sequence of channel input distributions $\lbrace p(x_i|x^{i-1},z^{i-1})\rbrace_{i=1}^\infty$ satisfying
\begin{equation}
\left(\liminf_{n\rightarrow \infty}\frac{1}{n}I(X^n\rightarrow Y^n||Z^n)\right)\bigg|_{X=X_\xi}=\sup_{X}\liminf_{n\rightarrow \infty}\frac{1}{n}I(X^n\rightarrow Y^n||Z^n)-\xi
\label{equ4_4}
\end{equation}
where $\left(\liminf_{n\rightarrow \infty}\frac{1}{n}I(X^n\rightarrow Y^n||Z^n)\right)\vert_{X=X_\xi}$ denotes that $\liminf_{n\rightarrow \infty}\frac{1}{n}I(X^n\rightarrow Y^n||Z^n)$ is evaluated at $X=X_\xi$. According to the definition of supremum, the existence of $X_\xi$ is guaranteed. Since for strong converse channels we have
\begin{equation*}
C_{FB}^{noise}=\sup_{X}\lim_{n\rightarrow \infty}\frac{1}{n}I^R(X^n(F^n)\rightarrow Y^n),
\end{equation*}
we know that, for any $\delta>0$, there exist a sequence of $(n,M,\epsilon_n)$ channel codes ($\epsilon_n\rightarrow 0 $ as $ n\rightarrow \infty$) with transmission rate
\begin{equation*}
R=\left(\lim_{n\rightarrow \infty}\frac{1}{n}I^R(X^n(F^n)\rightarrow Y^n)\right)\bigg|_{X=X_\xi}-(\delta-\xi).
\end{equation*}
By Lemma \ref{lemma4_7},
\begin{equation*}
\begin{split}
R=&\left(\lim_{n\rightarrow \infty}\frac{1}{n}(I(X^n\rightarrow Y^n||Z^n)-I(F^n;Z^n|Y^n))\right)\bigg|_{X=X_\xi}-(\delta-\xi) \\
=&\left(\lim_{n\rightarrow \infty}\frac{1}{n}(I(X^n\rightarrow Y^n||Z^n)-H(Z^n|Y^n)+H(Z^n|Y^n,F^n))\right)\bigg|_{X=X_\xi}-(\delta-\xi) \\
\geq &\left(\liminf_{n\rightarrow \infty}\frac{1}{n}(I(X^n\rightarrow Y^n||Z^n)-H(Z^n|Y^n))\right)\bigg|_{X=X_\xi}-(\delta-\xi) \\
=&\left(\liminf_{n\rightarrow \infty}\frac{1}{n}(I(X^n\rightarrow Y^n||Z^n)-\sum_{i=0}^n H(Z_i|Z^{i-1},Y^n))\right)\bigg|_{X=X_\xi}-(\delta-\xi) \\
\geq &\left(\liminf_{n\rightarrow \infty}\frac{1}{n}(I(X^n\rightarrow Y^n||Z^n)-\sum_{i=0}^n H(Z_i|Z^{i-1},Y^i))\right)\bigg|_{X=X_\xi}-(\delta-\xi) \\
\stackrel{(a)}{=}&\left(\liminf_{n\rightarrow \infty}\frac{1}{n}(I(X^n\rightarrow Y^n||Z^n)-\sum_{i=0}^n H(V_i|V^{i-1}))\right)\bigg|_{X=X_\xi}-(\delta-\xi) \\
\geq &\left(\liminf_{n\rightarrow \infty}\frac{1}{n}(I(X^n\rightarrow Y^n||Z^n)-H(V^n))\right)\bigg|_{X=X_\xi}-(\delta-\xi) \\
\geq &\left(\liminf_{n\rightarrow \infty}\frac{1}{n}I(X^n\rightarrow Y^n||Z^n)\right)\bigg|_{X=X_\xi}+\liminf_{n\rightarrow \infty}-\frac{1}{n}H(V^n)-(\delta-\xi) \\
=&\left(\liminf_{n\rightarrow \infty}\frac{1}{n}I(X^n\rightarrow Y^n||Z^n)\right)\bigg|_{X=X_\xi}-\limsup_{n\rightarrow \infty}\frac{1}{n}H(V^n)-(\delta-\xi) \\
\stackrel{(b)}{=}&\sup_{X}\liminf_{n\rightarrow \infty}\frac{1}{n}I(X^n\rightarrow Y^n||Z^n)-\xi-\bar{h}(V)-(\delta-\xi)  \\
=&\sup_{X}\liminf_{n\rightarrow \infty}\frac{1}{n}I(X^n\rightarrow Y^n||Z^n)-\bar{h}(V)-\delta\\
\stackrel{(c)}{=}&\sup_{X}\liminf_{n\rightarrow \infty}\frac{1}{n}I(X^n\rightarrow Y^n|V^n)-\bar{h}(V)-\delta\\
\end{split}
\end{equation*}
where (a) follows from the fact that $Z_i=Y_i+V_i$ and the Markov Chain $(Z^{i-1},Y^{i})-V^{i-1}-V_i$. Line (b) follows from equation (\ref{equ4_4}). Line (c) follows from Corollary $3$.\\
\indent Since $\delta$ can be arbitrarily small, the proof is complete.
\end{proof}

\begin{remark}
This theorem reveals an important message that the gap between the proposed upper and lower bounds only depends on the feedback additive noise $V$ (i.e. independent from the forward channel). Further, if the entropy rate of noise $V$ goes to zero\footnote{In many practical situations, the entropy rate of the feedback noise is small. For example, if the feedback link only suffers intersymbol interference as illustrated in Chapter $4$ \cite{bookGallager68}, the entropy rate turns out to be approximately $0.0808$. Further, if the cardinality of $V^\infty$ is finite (yet the feedback is still noisy), the entropy rate is clearly zero.}, the proposed upper and lower bound converges and thus the capacity is known.
\end{remark}

\indent We end this section by investigating two examples of noisy feedback channels.

\begin{example}
\indent The example shows that for DMC with noisy feedback the characterized upper bound equals to the open-loop capacity. This implies that the upper bound should be tight when the channel ``converges'' to DMC. Besides, this example verifies the result (i.e. Theorem \ref{Thm_4_1}) in Section IV.\\
\indent Consider a binary symmetric channel (BSC) with a binary symmetric feedback. Note that this is the simplest model of a noisy feedback channel, yet it captures most features of the general problem. We model the noisy channel/feedback as additive noise channel/feedback as follows.
\begin{equation*}
Y_i=X_i+ U_i  \quad (\text{mod}\quad 2) \quad \text{and} \quad Z_i=Y_i+ V_i \quad (\text{mod} \quad 2)
\end{equation*}
where we assume that $Pr(U_i=1)=1-Pr(U_i=0)=\alpha$ and $Pr(V_i=1)=1-Pr(V_i=0)=\beta$. It is known that the capacity of this noisy feedback channel equals the nonfeedback capacity $1-H(\alpha)$ where $H(\alpha)=-\alpha\log{\alpha}-(1-\alpha)\log{(1-\alpha)}$. Next, we show that maximizing the conditional directed information in Corollary \ref{coro_conditional_directed_info} provides the noisy feedback capacity. That is,
\begin{equation*}
\sup_{X}\liminf_{n\rightarrow \infty}\frac{1}{n}I(X^n\rightarrow Y^n|V^n)=1-H(\alpha).
\end{equation*}

\indent This can be done as follows.
\begin{equation*}
\begin{split}
\liminf_{n\rightarrow \infty}\frac{1}{n}I(X^n\rightarrow Y^n|V^n)=&\liminf_{n\rightarrow \infty}\frac{1}{n}\sum_{i=1}^n I(X^i;Y_i|Y^{i-1},V^{i-1})\\
=&\liminf_{n\rightarrow \infty}\frac{1}{n}\sum_{i=1}^n H(Y_i|Y^{i-1},V^{i-1})-H(Y_i|X^{i},Y^{i-1},V^{i-1}) \\
=&\liminf_{n\rightarrow \infty}\frac{1}{n}\sum_{i=1}^n H(Y_i|Y^{i-1},V^{i-1})-H(Y_i|X_{i}) \\
=&\liminf_{n\rightarrow \infty}\frac{1}{n}\sum_{i=1}^n H(Y_i|Y^{i-1},V^{i-1})-H(U_i) \\
\stackrel{(a)}{\leq} &\liminf_{n\rightarrow \infty}\frac{1}{n}\sum_{i=1}^n H(Y_i|Y^{i-1})-H(U_i) \\
\stackrel{(b)}{\leq} &\liminf_{n\rightarrow \infty}\frac{1}{n}\sum_{i=1}^n H(Y_i)-H(U_i)\\
\stackrel{(c)}{\leq} &1-H(\alpha)\\
\end{split}
\end{equation*}
where taking equality in $(a)$ implies $\liminf_{n\rightarrow \infty}\frac{1}{n}I(V^{n-1};Y^n)=0$, that is, the capacity-achieving encoder should be non-typical. This verifies Theorem \ref{Thm_4_1} in Section IV. Taking equalities in $(b)$ and $(c)$ imply that the capacity-achieving encoder should produce equal-probability channel outputs (i.e. uniform distribution). It is obvious that there exists such an optimal encoder that all above equalities hold.
\end{example}

\begin{figure}
\begin{center}
\includegraphics[scale=0.75]{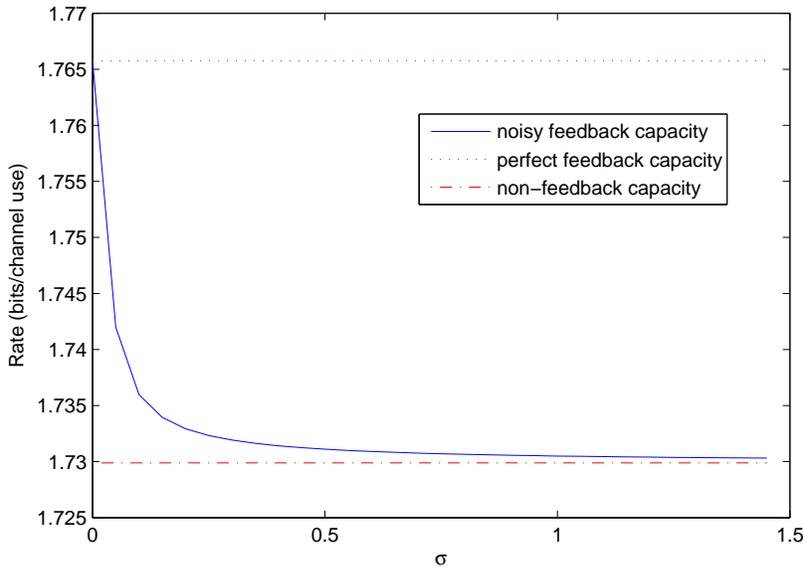}
\caption{The upper bound on the capacity of a first moving average Gaussian channel with AWGN feedback.}
\label{sim_fig01_0.1}
\end{center}
\end{figure}

\begin{example} In this example, we consider a colored Gaussian channel with additive white Gaussian noise feedback and compute the proposed upper bound\footnote{Although the Gaussian channels are not finite-alphabet, the upper bound characterization still holds. The derivation of the upper bound follows exactly the same idea in this paper and can be found in \cite{Chong11_allerton}.}. Specifically, we assume the forward channel and the feedback link as follows.
\begin{equation*}
Y_i=X_i+W_i \quad \text{and} \quad Z_i=Y_i+V_i
\end{equation*}
where $W_i=U_i+0.1U_{i-1}$, $U_i$ is a white Gaussian process with zero mean and unit variance and $V_i$ a white Gaussian process with zero mean and variance $\sigma$. We take coding block length $n=30$ and power limit $P=10$ for computing the upper bound. See Fig. \ref{sim_fig01_0.1}. We refer the interested readers to \cite{Chong11_isit_bounds, Chong11_allerton} for the details of the computation and discussions. From the plot of the upper bound, we see that the noisy feedback capacity is very sensitive to the feedback noise, at least for certain Gaussian channels.
\end{example}

\section{Conclusion}
\indent We proposed a new concept, the \textit{residual directed information} for characterizing the effective information flow through communication channels with noisy feedback, which extends Massey's concept of \textit{directed information}. Based on this new concept, we first analyzed the information flow in noisy feedback channels and then showed that the capacity of DMC is not achievable by using any typical closed-loop encoder. We next proved a general channel coding theorem in terms of the proposed residual directed information. Finally, we proposed computable bounds characterized by the causal conditional directed information.\\
\indent The results in the paper open up new directions for investigating the role of noisy feedback in communication systems. Furthermore, the new definitions, concepts and methodologies presented in the paper are potential to be extended to multiple access channels, broadcast channels or general multi-user channels with noisy feedback.

\section{Appendix}
\subsection{Proof of Lemma \ref{lemma4_6}}
Before giving the proof, we need the following Lemma.
\begin{lemma}
For channels with noisy feedback, as shown in Fig.\ref{figure1},
\begin{equation*}
p(x^n,y^n)=\sum_{z^{n}\in \mathcal{Z}^{n}}\prod_{i=1}^{n}\underbrace{p(z_i|y^i,z^{i-1})}_{\text{Feedback link}} \underbrace{p(x_i|x^{i-1},z^{i-1})}_{\text{Encoding}}\underbrace{p(y_i|x^i,y^{i-1})}_{\text{Channel}}
\end{equation*}
\label{lemma4_0}
\end{lemma}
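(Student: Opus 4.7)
The plan is to start from the full joint distribution $p(x^n,y^n,z^n)$, apply the chain rule under the natural time ordering $X_1,Y_1,Z_1,X_2,Y_2,Z_2,\ldots,X_n,Y_n,Z_n$ induced by the block diagram of Fig.\ref{figure1}, simplify each factor using the structural independence properties of the model, and finally marginalize over $z^n$. Because the lemma is essentially a factorization identity forced by the topology of the system, the work will be bookkeeping rather than deep analysis.

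First I would write the chain rule
\begin{equation*}
p(x^n,y^n,z^n)=\prod_{i=1}^{n} p(x_i|x^{i-1},y^{i-1},z^{i-1})\,p(y_i|x^i,y^{i-1},z^{i-1})\,p(z_i|x^i,y^i,z^{i-1}),
\end{equation*}
which is valid for any joint distribution on the ordered tuple. Then I would collapse each factor using the model. The encoder is specified as a map that, at time $i$, sees only the past feedback outputs $z^{i-1}$ (and past inputs it has already produced), so $p(x_i|x^{i-1},y^{i-1},z^{i-1})=p(x_i|x^{i-1},z^{i-1})$. The forward channel is modeled as $p(y_i|x^i,y^{i-1})$, independent of what arrives at the feedback link's output, giving $p(y_i|x^i,y^{i-1},z^{i-1})=p(y_i|x^i,y^{i-1})$. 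The noisy feedback link is modeled as $p(z_i|y^i,z^{i-1})$, so $p(z_i|x^i,y^i,z^{i-1})=p(z_i|y^i,z^{i-1})$; the $X^i$ dependence drops because the feedback link observes only the channel output sequence.

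After these three substitutions,
\begin{equation*}
p(x^n,y^n,z^n)=\prod_{i=1}^{n} p(x_i|x^{i-1},z^{i-1})\,p(y_i|x^i,y^{i-1})\,p(z_i|y^i,z^{i-1}),
\end{equation*}
and marginalizing over $z^n$ yields exactly the claimed formula:
\begin{equation*}
p(x^n,y^n)=\sum_{z^n\in\mathcal{Z}^n}\prod_{i=1}^{n} p(z_i|y^i,z^{i-1})\,p(x_i|x^{i-1},z^{i-1})\,p(y_i|x^i,y^{i-1}).
\end{equation*}

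The only genuine step requiring care is justifying each of the three conditional independencies, which amounts to being precise about what each block in Fig.\ref{figure1} is allowed to depend on; the rest is a direct application of the chain rule followed by summation. Since the lemma statement already hard-codes the deterministic encoder and memoryless topology, no additional probabilistic arguments are needed, and I do not anticipate any real obstacle beyond making these structural assumptions explicit.
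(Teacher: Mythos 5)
Your proposal is correct and follows essentially the same approach as the paper: both decompose $p(x^n,y^n,z^n)$ via the chain rule in the natural time ordering and then drop the extraneous conditioning variables from each factor using the three structural Markov chains (the paper writes them as $x^n\!-\!(y^n,z^{n-1})\!-\!z_n$, $z^{n-1}\!-\!(x^n,y^{n-1})\!-\!y_n$, and $y^{n-1}\!-\!(x^{n-1},z^{n-1})\!-\!x_n$, which are exactly your three simplifications). The only cosmetic difference is that the paper peels off the factors for time index $n$ first and then recurses, whereas you write the full chain-rule product in one step before simplifying; the content is identical.
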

\begin{proof}
\begin{equation*}
\begin{split}
p(x^n,y^n)&=\sum_{z^{n}\in \mathcal{Z}^{n}}p(x^n,y^n,z^n)\\
&=\sum_{z^{n}\in \mathcal{Z}^{n}}p(z_{n}|x^{n},y^{n},z^{n-1}) p(x^{n},y^{n},z^{n-1})\\
&=\sum_{z^{n}\in \mathcal{Z}^{n}}p(z_{n}|x^{n},y^{n},z^{n-1}) p(y_n|x^{n},y^{n-1},z^{n-1}) p(x^{n},y^{n-1},z^{n-1})\\
&=\sum_{z^{n}\in \mathcal{Z}^{n}}p(z_{n}|x^{n},y^{n},z^{n-1}) p(y_n|x^{n},y^{n-1},z^{n-1}) p(x_n|x^{n-1},y^{n-1},z^{n-1})\\
&\indent  p(x^{n-1},y^{n-1},z^{n-1})\\
&\stackrel{(a)}{=}\sum_{z^{n}\in \mathcal{Z}^{n}}p(z_{n}|y^{n},z^{n-1}) p(y_n|x^{n},y^{n-1}) p(x_n|x^{n-1},z^{n-1})\\
&\indent  p(x^{n-1},y^{n-1},z^{n-1}) \\
&=\sum_{z^{n}\in \mathcal{Z}^{n}}\prod_{i=1}^{n}p(z_i|y^i,z^{i-1}) p(x_i|x^{i-1},z^{i-1})p(y_i|x^i,y^{i-1}) \\
\end{split}
\end{equation*}
where $(a)$ follows from the Markov chains: $x^{n}- (y^{n},z^{n-1})-z_{n}$, $z^{n-1}- (x^{n},y^{n-1})-y_{n}$ and $y^{n-1}-(x^{n-1},z^{n-1})-x_{n}$.
\end{proof}

\indent Now, we are ready to give the proof of Lemma \ref{lemma4_6}.
\begin{proof}
\begin{equation*}
\begin{split}
\indent &p(x^n,y^n,f^n)\\
&=p(x^n,y^n|f^n) p(f^n)\\
&\stackrel{(a)}{=}p(f^n) \sum_{z^{n}\in \mathcal{Z}^{n}}\prod_{i=1}^{n}p(z_i|y^i,z^{i-1},f^n) p(x_i|x^{i-1},z^{i-1},f^n) p(y_i|x^i,y^{i-1},f^n)\\
&=p(f^n) \sum_{z^{n}\in \lbrace\mathcal{Z}^{n}:x^n=f^n(z^{n-1})\rbrace}\prod_{i=1}^{n}p(z_i|y^i,z^{i-1},f^n) p(y_i|f^i(z^{i-1}),y^{i-1},f^n) \\
&\stackrel{(b)}{=}p(f^n) \sum_{z^{n}\in \lbrace\mathcal{Z}^{n}:x^n=f^n(z^{n-1})\rbrace}\prod_{i=1}^{n}p(z_i|y^i,z^{i-1}) p(y_i|f^i(z^{i-1}),y^{i-1}) \\
&\stackrel{(c)}{=}\prod_{i=1}^{n} \prod_{z^{i-1}}p(f_i(z^{i-1})|f^{i-1}(z^{i-2}),z^{i-1}) \sum_{z^{n}\in \lbrace\mathcal{Z}^{n}:x^n=f^n(z^{n-1})\rbrace}\prod_{i=1}^{n}p(z_i|y^i,z^{i-1})p(y_i|f^i(z^{i-1}),y^{i-1}) \\
\end{split}
\end{equation*}
where (a) follows from Lemma \ref{lemma4_0}. Line (b) follows from the Markov chains: $f^n - (y^i,z^{i-1})- z_i$ and $f^n -(f^i(z^{i-1}),y^{i-1})-y_i$. Line (c) follows from Lemma \ref{lemma4_2}.
\end{proof}

\subsection{Proof of Theorem \ref{thm4_2}}
\indent We now prove the channel coding theorem by combining the following converse theorem and achievability theorem.\\
a). \textit{Converse Theorem}\\
\indent The following is a generalization of theorem $4$ in \cite{Verdu94} which gives an upper bound for bounding the block error probability.
\begin{lemma}
Every $(n,M,\epsilon)$ channel code satisfies
\begin{equation*}
\epsilon\geq Prob\lbrace\frac{1}{n}i^R(X^n(F^n)\rightarrow Y^n)\leq \frac{1}{n}\log M-\gamma\rbrace-2^{-\gamma n}
\end{equation*}
for every $\gamma>0$.
\label{lemma4_4}
\end{lemma}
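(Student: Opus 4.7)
The plan is to reduce the statement to the standard Verdú--Han converse bound (Theorem 4 of \cite{Verdu94}) by exploiting the code-function reformulation already set up in Section~V and the density identity in Lemma~\ref{lemma4_3}. Recall that the noisy feedback channel, viewed from $\mathcal{F}^n$ to $\mathcal{Y}^n$, has no feedback at all: the code-function $F^n$ is drawn once at time $0$ and the receiver observes $Y^n$. Lemma~\ref{lemma4_3} tells us that $i(F^n;Y^n)=i^R(X^n(F^n)\rightarrow Y^n)$ pointwise, so the density appearing in the lemma is exactly the mutual information density of this equivalent nonfeedback channel. The whole proof therefore amounts to running the classical Verdú--Han argument on the $(F^n,Y^n)$ pair.

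Concretely, I would proceed as follows. Let $W$ be uniform on $\{1,\ldots,M\}$, let $f^n[w]$ be the $w$-th code-function, and let $\{D_w\}_{w=1}^M$ be the decoding regions (disjoint subsets of $\mathcal{Y}^n$) induced by $g$. Define the ``atypical'' event
\begin{equation*}
A=\Bigl\{\tfrac{1}{n}i^R(X^n(F^n)\rightarrow Y^n)\leq \tfrac{1}{n}\log M-\gamma\Bigr\}.
\end{equation*}
Using Lemma~\ref{lemma4_3}, membership in $A$ is equivalent to $p(y^n\mid f^n[w])\le M\,2^{-\gamma n}\,p(y^n)$. I would then estimate
\begin{equation*}
\Pr\bigl[A\cap\{W=\hat W\}\bigr]
=\frac{1}{M}\sum_{w=1}^M\sum_{y^n\in D_w}p(y^n\mid f^n[w])\,\mathbf{1}_A(w,y^n)
\le \frac{1}{M}\sum_{w=1}^M\sum_{y^n\in D_w} M\,2^{-\gamma n}\,p(y^n)
\le 2^{-\gamma n},
\end{equation*}
where the last step uses the disjointness of the $D_w$'s to bound $\sum_w\sum_{y^n\in D_w}p(y^n)\le 1$. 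Combining this with $\Pr[A\cap\{W\ne\hat W\}]\le \Pr[W\ne\hat W]\le \epsilon$ yields $\Pr[A]\le \epsilon+2^{-\gamma n}$, which is a rearrangement of the claim.

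The main step that requires some care, rather than being a routine calculation, is checking that the conditional law of $Y^n$ given $F^n=f^n[w]$ in this ``equivalent nonfeedback channel'' is genuinely a valid channel law one may plug into the Verdú--Han template. This is where Lemma~\ref{lemma4_6} is doing the real work: once one integrates out $Z^n$ from the joint distribution, the marginal $p(y^n\mid f^n)$ is a well-defined kernel from $\mathcal{F}^n$ to $\mathcal{Y}^n$ that does not depend on the choice of message prior, and therefore the density $i(F^n;Y^n)$ computed from the product $p(f^n)p(y^n\mid f^n)$ is exactly $i^R(X^n(F^n)\rightarrow Y^n)$ as per Lemma~\ref{lemma4_3}. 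Granting these two lemmas, the proof is a direct transcription of the nonfeedback converse; no feedback-specific difficulty arises because the code-function representation absorbs all the causal feedback dynamics into a single offline random object $F^n$.
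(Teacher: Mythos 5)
Your proof is correct and takes essentially the same route as the paper: recast the noisy-feedback code as a code for the nonfeedback channel $\mathcal{F}^n\to\mathcal{Y}^n$, invoke the Verd\'u--Han converse there under the disjoint-decoding-region assumption, and translate the information density via Lemma~\ref{lemma4_3}. The only difference is cosmetic --- the paper cites the Verd\'u--Han bound from \cite{Verdu94} as a black box, whereas you re-derive it inline by the standard typicality-plus-disjointness estimate; both hinge on exactly the same two ingredients.
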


\begin{proof}
We assume the disjointness of the decoding sets $D$. i.e. $D_{w=i}\cap D_{w=j}=\emptyset$ if $i \neq j$. Under this restriction on the decoder, \cite{Verdu94} has shown that any $(n,M,\epsilon)$ channel code for the nonfeedback channel $\mathcal{F}^n\rightarrow \mathcal{Y}^n$ satisfies for all $\gamma>0$
\begin{equation*}
\epsilon\geq Prob\lbrace\frac{1}{n}i(F^n; Y^n)\leq \frac{1}{n}\log M-\gamma \rbrace-2^{-\gamma n}
\end{equation*}
By Lemma \ref{lemma4_3}, we have
\begin{equation*}
i(F^n;Y^n)=i^R(X^n(F^n)\rightarrow Y^n)
\end{equation*}
The proof is complete.
\end{proof}

\indent Note that this Lemma holds independently of the decoder that one uses. The only restriction on the decoder is the disjointness of the decoding region.
\begin{theorem}(\textit{Converse Theorem})
\begin{equation*}
C_{FB}^{noise}\leq \sup_{X}\underline{I}^R(X(F)\rightarrow Y)
\end{equation*}
\label{thm4_3}
\end{theorem}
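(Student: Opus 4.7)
The plan is to establish the converse via the information-spectrum argument of Verd\'u--Han, with Lemma \ref{lemma4_4} as the non-asymptotic cornerstone and the code-function machinery (Lemmas \ref{lemma4_2}, \ref{lemma4_3}, \ref{lemma4_6}) bridging the nonfeedback surrogate channel $\mathcal{F}^n \to \mathcal{Y}^n$ back to the native channel input distribution over which the supremum is taken.

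First I would fix an achievable rate $R$ and a small $\gamma > 0$. By the definition of $C_{FB}^{noise}$, $R$ is $\epsilon$-achievable for every $\epsilon \in (0,1)$, so a standard diagonalization produces a sequence of $(n, M_n, \epsilon_n)$ codes with $\epsilon_n \to 0$ and $\frac{1}{n}\log M_n \geq R - \gamma/2$ eventually. Taking $W$ uniform over $\{1, \ldots, M_n\}$ picks one of the $M_n$ code-functions, yielding a joint law $p(f^n)$; by Lemma \ref{lemma4_2} this is ``good with respect to'' some channel input distribution $\{p(x_i \mid x^{i-1}, z^{i-1})\}$, and by Lemma \ref{lemma4_6} the resulting joint law $p(x^n, y^n, f^n)$ factors through the channel and feedback link in the expected way.

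Next I would apply Lemma \ref{lemma4_4} to this code with the same $\gamma$. Since $\epsilon_n \to 0$, $2^{-\gamma n} \to 0$, and $\frac{1}{n}\log M_n - \gamma \geq R - 3\gamma/2$ for large $n$, the non-asymptotic bound yields
$$\Pr\left\{\tfrac{1}{n}\, i^R(X^n(F^n) \to Y^n) \leq R - \tfrac{3\gamma}{2}\right\} \longrightarrow 0.$$
The definition of the limit inferior in probability then gives $\underline{I}^R(X(F) \to Y) \geq R - 3\gamma/2$ for the induced input distribution, hence $\sup_X \underline{I}^R(X(F) \to Y) \geq R - 3\gamma/2$. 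Sending $\gamma \to 0$ and taking the supremum over achievable $R$ completes the argument.

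The step I expect to require the most care is not any single calculation but the bookkeeping between the surrogate channel $\mathcal{F}^n \to \mathcal{Y}^n$, where Lemma \ref{lemma4_4} lives naturally through $i(F^n; Y^n)$, and the native channel input distribution over which the theorem is indexed. Lemma \ref{lemma4_3} supplies the pointwise identity $i(F^n; Y^n) = i^R(X^n(F^n) \to Y^n)$, and Lemma \ref{lemma4_6} combined with the construction in Lemma \ref{lemma4_2} guarantees that the law of $i^R(X^n(F^n) \to Y^n)$ is completely determined by $\{p(x_i \mid x^{i-1}, z^{i-1})\}$, so the supremum over $X$ on the right-hand side is indeed the correct object. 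A minor technical caveat is that Lemma \ref{lemma4_4} is stated under disjoint decoding regions; this is harmless because any decoder can be refined to produce disjoint regions without increasing the error probability, so no generality is lost.
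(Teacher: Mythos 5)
Your proposal is correct and follows essentially the same route as the paper's proof: both apply the non-asymptotic Verd\'u--Han lower bound of Lemma~\ref{lemma4_4} to a sequence of codes with vanishing error, let the probability term vanish as $n\to\infty$, and invoke the definition of $p\text{-}\liminf$ to conclude $R\leq\sup_X\underline{I}^R(X(F)\to Y)$, with Lemma~\ref{lemma4_3} supplying the identification $i(F^n;Y^n)=i^R(X^n(F^n)\to Y^n)$. One small remark: the disjointness caveat you flag is automatic here, since the paper defines the decoder as a map $g:\mathcal{Y}^n\to\mathcal{W}$, so the decoding regions $g^{-1}(\{w\})$ are disjoint by construction.
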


\begin{proof}
Assume that there exists a sequence of $(n,M,\epsilon_n)$ channel codes with $\epsilon_n\rightarrow 0$ as $n\rightarrow \infty$ and with transmission rate
\begin{equation*}
R=\liminf_{n\rightarrow \infty}\frac{1}{n}\log M.
\end{equation*}
By Lemma \ref{lemma4_4}, we know that for all $\gamma>0$,
\begin{equation*}
\epsilon_n \geq Prob\lbrace\frac{1}{n}i^R(X^n(F^n)\rightarrow Y^n)\leq \frac{1}{n}\log M-\gamma\rbrace-2^{-\gamma n}
\end{equation*}
As $n\rightarrow \infty$, the probability on the right-hand side must go to zero since $\epsilon_n\rightarrow 0$.
By the definition of $\underline{I}^R(X(F)\rightarrow Y)$, we have
\begin{equation*}
\limsup_{n\rightarrow \infty}\frac{1}{n}\log M-\gamma \leq \underline{I}^R(X(F)\rightarrow Y)
\end{equation*}
Since $\gamma$ can be arbitrarily small, we have
\begin{equation*}
R\leq \limsup_{n\rightarrow \infty}\frac{1}{n}\log M\leq \underline{I}^R(X(F)\rightarrow Y)\leq \sup_{X} \underline{I}^R(X(F)\rightarrow Y)
\end{equation*}
The proof is complete.
\end{proof}
b). \textit{Achievability Theorem}\\
\indent The following is a generalization of Feinstein's lemma \cite{Feinstein54} based on the residual directed information.
\begin{lemma}
Fix a positive integer $n$, $0<\epsilon<1$, a channel $\lbrace p(y_i|x^i,y^{i-1})\rbrace_{i=1}^n$ and a feedback link $\lbrace p(z_i|y^i,z^{i-1})\rbrace_{i=1}^n$. For every $\gamma>0$ and a channel input distribution $\lbrace p(x_i|x^{i-1},z^{i-1})\rbrace_{i=1}^n$, there exists a channel code $(n,M,\epsilon)$ that satisfies
\begin{equation*}
\epsilon \leq Prob\lbrace\frac{1}{n}i^R(X^n(F^n)\rightarrow Y^n)\leq \frac{1}{n}\log M+\gamma\rbrace+2^{-\gamma n}
\end{equation*}
\label{lemma4_5}
\end{lemma}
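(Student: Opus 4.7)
The strategy is to reduce the statement to the classical Feinstein lemma for nonfeedback channels, applied to the equivalent nonfeedback channel $\mathcal{F}^n \to \mathcal{Y}^n$ shown in Fig.\ref{figure4}. The key enabling facts have all been established earlier in the paper: Lemma \ref{lemma4_2} lets us realize any prescribed channel input distribution by a suitable code-function distribution, Lemma \ref{lemma4_6} gives the explicit joint law on $(X^n,Y^n,F^n)$, and Lemma \ref{lemma4_3} identifies $i(F^n;Y^n)$ pointwise with $i^R(X^n(F^n)\rightarrow Y^n)$.

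First, given the channel input distribution $\lbrace p(x_i|x^{i-1},z^{i-1})\rbrace_{i=1}^n$, I would construct the associated code-function distribution via formula (\ref{equa4_6}),
\begin{equation*}
p(f_i|f^{i-1}) = \prod_{z^{i-1}} p\bigl(f_i(z^{i-1}) \mid f^{i-1}(z^{i-2}), z^{i-1}\bigr).
\end{equation*}
By Lemma \ref{lemma4_2} this distribution on $\mathcal{F}^n$ is \emph{good with respect to} the prescribed input law, so that the marginal on $X^n$ induced through $x^n = f^n(z^{n-1})$ coincides with the original $\lbrace p(x_i|x^{i-1},z^{i-1})\rbrace$. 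Using Lemma \ref{lemma4_6}, the full joint distribution $p(x^n,y^n,f^n)$ is then well-defined; marginalizing gives a legitimate joint law $p(f^n,y^n)$ on the input/output pair of the \emph{nonfeedback} channel $\mathcal{F}^n\rightarrow\mathcal{Y}^n$ (there is no feedback from $Y$ to $F$ because each code-function is chosen before transmission starts).

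Second, I would invoke the classical Feinstein lemma (\cite{Feinstein54}, also Lemma 3.4.1 in \cite{bookhan03}) on this derived nonfeedback channel with input law $p(f^n)$ and output law induced through $p(y^n|f^n)$. That lemma produces, for the given $\gamma>0$ and $M$, an $(n,M,\epsilon)$ code in the code-function sense of Definition \ref{def5_01} with disjoint decoding regions and error bound
\begin{equation*}
\epsilon \leq \mathrm{Prob}\left\lbrace \tfrac{1}{n} i(F^n;Y^n) \leq \tfrac{1}{n}\log M + \gamma \right\rbrace + 2^{-\gamma n}.
\end{equation*}
Finally, Lemma \ref{lemma4_3} asserts the \emph{pointwise} identity $i(F^n;Y^n) = i^R(X^n(F^n)\rightarrow Y^n)$, so the event inside the probability can be rewritten verbatim in terms of $i^R$, yielding the desired inequality.

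The conceptual heart of the argument is therefore not a new calculation but the reduction itself; once we view the noisy-feedback system through its code-function representation, all the feedback is absorbed into the choice of $p(f^n)$ and the channel seen by $F^n$ is memoryless of feedback. The potential obstacle is ensuring that the classical Feinstein lemma applies off-the-shelf to the general (non-memoryless, non-stationary) nonfeedback channel $p(y^n|f^n)$ coming out of Lemma \ref{lemma4_6}; fortunately the version in \cite{bookhan03} is stated at exactly this level of generality (general input/output alphabets and arbitrary $n$-th order transition kernel), so no additional regularity assumptions are needed beyond the finite-alphabet hypothesis already in force. Combining this achievability lemma with the converse Theorem \ref{thm4_3} then closes the proof of the channel coding theorem, Theorem \ref{thm4_2}.
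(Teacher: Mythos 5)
Your proposal is correct and follows essentially the same route as the paper: construct a code-function distribution via Lemma~\ref{lemma4_2} that reproduces the prescribed channel input law, apply the classical Feinstein lemma to the equivalent nonfeedback channel $\mathcal{F}^n\to\mathcal{Y}^n$, and translate $i(F^n;Y^n)$ into $i^R(X^n(F^n)\rightarrow Y^n)$ via the pointwise identity of Lemma~\ref{lemma4_3}. The only cosmetic difference is that you invoke the Han/Feinstein version directly while the paper cites Verd\'u--Han \cite{Verdu94} for the same general-alphabet Feinstein bound; the mathematics is identical.
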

\begin{proof}
Given a channel input distribution $\lbrace p(x_i|x^{i-1},z^{i-1})\rbrace_{i=1}^n$, we generate a code-function distribution $\lbrace p(f_i|f^{i-1})\rbrace_{i=1}^n$ such that the induced channel input distribution equals the original channel input distribution. There exists such a code-function distribution according to Lemma \ref{lemma4_2}. In \cite{Verdu94}, it has been shown that for a nonfeedback channel $\lbrace p(y_i|f^i,y^{i-1})\rbrace_{i=1}^n$, a channel input distribution $\lbrace p(f_i|f^{i-1})\rbrace_{i=1}^n$ and for every $\gamma>0$, there exists a channel code $(n,M,\epsilon)$ that satisfies
\begin{equation*}
\epsilon \leq Prob\lbrace\frac{1}{n}i(F^n; Y^n)\leq \frac{1}{n}\log M+\gamma\rbrace+2^{-\gamma n}
\end{equation*}
Recall that this result is proved by random coding argument. Then, by Lemma \ref{lemma4_3}, we have
\begin{equation*}
i(F^n;Y^n)=i^R(X^n(F^n)\rightarrow Y^n)
\end{equation*}
The proof is complete after simple replacement.
\end{proof}

\begin{theorem}(\textit{Achievability Theorem})
\begin{equation*}
C_{FB}^{noise}\geq \sup_{X}\underline{I}^R(X(F)\rightarrow Y)
\end{equation*}
\label{thm4_4}
\end{theorem}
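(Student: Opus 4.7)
The plan is to mirror the standard achievability argument for non-feedback channels, using the Feinstein-type bound already established in Lemma \ref{lemma4_5}. The key observation is that Lemma \ref{lemma4_5} reduces the problem to showing that the probability appearing on its right-hand side vanishes as $n\to\infty$, which follows directly from the definition of the probabilistic liminf $\underline{I}^R(X(F)\to Y)$.

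Concretely, I would proceed as follows. Fix an arbitrary channel input distribution $\{p(x_i|x^{i-1},z^{i-1})\}_{i=1}^\infty$, call it $X^*$, and let $R$ be any rate strictly less than $\underline{I}^R(X^*(F)\to Y)$. Choose $\gamma>0$ small enough that $R+2\gamma<\underline{I}^R(X^*(F)\to Y)$. For each $n$, set $M_n=\lceil 2^{n(R+\gamma)}\rceil$ and invoke Lemma \ref{lemma4_5} under the input distribution $X^*$ with this $\gamma$; this yields, for each $n$, a channel code $(n,M_n,\epsilon_n)$ with
\begin{equation*}
\epsilon_n \leq \mathrm{Prob}\left\{\tfrac{1}{n}i^R(X^n(F^n)\to Y^n)\leq \tfrac{1}{n}\log M_n+\gamma\right\}+2^{-\gamma n}.
\end{equation*}
Because $\tfrac{1}{n}\log M_n\to R+\gamma$ and the event above is eventually contained in $\{\tfrac{1}{n}i^R\leq R+2\gamma\}$, the definition of $p\text{-}\liminf$ (together with $R+2\gamma<\underline{I}^R(X^*(F)\to Y)$) forces the probability term to $0$; the second term $2^{-\gamma n}$ trivially tends to $0$. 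Hence $\epsilon_n\to 0$.

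This shows that the rate $R+\gamma-o(1)$ is $\epsilon$-achievable for every $\epsilon\in(0,1)$, and consequently $R\leq C_{FB}^{noise}$. Letting $R\uparrow \underline{I}^R(X^*(F)\to Y)$ gives $C_{FB}^{noise}\geq \underline{I}^R(X^*(F)\to Y)$ for the chosen $X^*$, and then taking the supremum over all admissible input distributions $X$ yields
\begin{equation*}
C_{FB}^{noise}\geq \sup_{X}\underline{I}^R(X(F)\to Y),
\end{equation*}
which is the desired achievability bound.

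I do not expect any substantive obstacle in this step, because the heavy lifting has been done upstream: Lemma \ref{lemma4_2} produces a code-function distribution that induces the prescribed channel input distribution, Lemma \ref{lemma4_3} identifies $i(F^n;Y^n)$ with $i^R(X^n(F^n)\to Y^n)$, and Lemma \ref{lemma4_5} then transports Feinstein's classical random-coding bound for the non-feedback super-channel $\mathcal{F}^n\to\mathcal{Y}^n$ into a bound phrased in terms of the residual directed information. The only mildly delicate point is the bookkeeping around $\gamma$ and $\lceil 2^{nR}\rceil$ versus $2^{nR}$, but this is routine $\epsilon$-$\delta$ manipulation identical to the classical non-feedback achievability proof in \cite{Verdu94}.
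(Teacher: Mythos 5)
Your argument is correct and follows essentially the same route as the paper: invoke the Feinstein-type bound of Lemma \ref{lemma4_5}, use the definition of the probabilistic liminf to drive the probability term to zero, and then take the supremum over input distributions. One small bookkeeping slip: with $M_n=\lceil 2^{n(R+\gamma)}\rceil$ you have $\tfrac{1}{n}\log M_n\geq R+\gamma$, so the event $\{\tfrac{1}{n}i^R\leq\tfrac{1}{n}\log M_n+\gamma\}$ is a \emph{superset}, not a subset, of $\{\tfrac{1}{n}i^R\leq R+2\gamma\}$; the fix is to note $\tfrac{1}{n}\log M_n+\gamma=R+2\gamma+o(1)$ and apply the $p$-$\liminf$ definition at a threshold $R+2\gamma+\eta<\underline{I}^R(X^*(F)\to Y)$ for some small $\eta>0$, which is exactly the kind of slack the paper builds in by working with $\gamma=\delta/4$ and the margin $\underline{I}^R-\delta/4$.
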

\begin{proof}
Fix arbitrary $0<\epsilon<1$ and channel input distribution $\lbrace p(x_i|x^{i-1},z^{i-1})\rbrace_{i=1}^n$. We shall show that $\underline{I}^R(X(F)\rightarrow Y)$ is a $\epsilon$-achievable rate by demonstrating that, for every $\delta>0$, and all sufficient large $n$, there exists a $(n,M,2^{-\frac{n\delta}{4}}+\frac{\epsilon}{2})$ code with rate
\begin{equation*}
\underline{I}^R(X(F)\rightarrow Y)-\delta<\frac{\log M}{n}<\underline{I}^R(X(F)\rightarrow Y)-\frac{\delta}{2}
\end{equation*}
If, in Lemma \ref{lemma4_5}, we choose $\gamma=\frac{\delta}{4}$, then the right-hand side value in Lemma \ref{lemma4_5} becomes
\begin{equation*}
\begin{split}
&Prob\lbrace\frac{1}{n}i^R(X^n(F^n)\rightarrow Y^n)\leq \frac{1}{n}\log M+\frac{\delta}{4}\rbrace+2^{-\frac{n\delta}{4}}\\
\leq&Prob\lbrace\frac{1}{n}i^R(X^n(F^n)\rightarrow Y^n)\leq \underline{I}^R(X(F)\rightarrow Y)-\frac{\delta}{4}\rbrace+2^{-\frac{n\delta}{4}}\\
\leq&\frac{\epsilon}{2}+2^{-\frac{n\delta}{4}}\\
\end{split}
\end{equation*}
where the second inequality holds for all sufficiently large n because of the definition of $\underline{I}^R(X(F)\rightarrow Y)$. Therefore, $\underline{I}^R(X(F)\rightarrow Y)$ is a $\epsilon$-achievable rate.
\end{proof}

\indent The proof of Theorem \ref{thm4_2} is obtained by combining Theorem \ref{thm4_3} and Theorem \ref{thm4_4}. 
\bibliographystyle{IEEEtran}
\bibliography{ref}

\end{document}